\documentclass[AMA,Times1COL]{WileyNJDv5} 

\articletype{Article Type}%

\received{Date Month Year}
\revised{Date Month Year}
\accepted{Date Month Year}
\journal{Journal}
\volume{00}
\copyyear{2023}
\startpage{1}

\usepackage{subcaption}

\begin{document}

\title{A Novel Urban Flood Dynamical System Model and a Corresponding Nonstandard Finite Difference Method}

\author[1]{Yongfu Tian}

\author[1]{Shan Ding}

\author[1]{Guofeng Su}

\author[1]{Jianguo Chen}

\authormark{TAYLOR \textsc{et al.}}
\titlemark{A Novel Urban Flood Dynamical System Model and a Corresponding Nonstandard Finite Difference Method}

\address[1]{\orgdiv{School of Safety Science}, \orgname{Tsinghua University}, \orgaddress{\state{Beijing}, \country{China}}}

\corres{Jianguo Chen, Lvdalong Building, Tsinghua University, Beijing, 100084, China. \email{chenjianguo@mail.tsinghua.edu.cn}}


\abstract[Abstract]{Urban flood disaster is one of the most serious natural disasters. Numerous flood simulation models have been proposed and relatively matured. However, two major challenges persist: excessive simplification of the city system and high computational complexity. To break these limitations, this paper develops an Urban Flood Dynamical System Model (UFDSM) based on the concept of the Cellular Automata Urban Flood Model. This model allows flexible customization of cell types and selection of water motion or distribution rules based on actual urban environments to incorporate as much the urban system data as possible. The water motion and distribution rules can be simple, which could reduce the computational complexity, but not arbitrary. So, a sufficient condition is provided so that solutions of dynamical system align with macroscopic physical conditions governing water movement. Then, to preserve the evolutionary properties of the UFDSM, we propose a first-order conservation nonstandard finite difference algorithm. This numerical method ensures positive solutions and conservation of water while maintaining the same fixed-point characteristics as the dynamical system. And, this numerical method is validated by comparing it with an analytical solution.Furthermore, to verify the applicability of our model, we performed an urban flood simulation experiment and compared it to HEC-RAS. There is approximately a 2mm discrepancy in distance $d_p^\prime$ and 0.02mm discrepancy in distance $d_2^\prime$ , with the relative distance $R_p$ about 7.5\% and the relative distance $R_2$ approximately 0.06\%. Additionally, the proposed model is easily coupled with other hydrological processes and facilitates data assimilation, thereby offering promising practical applications.}

\keywords{Hydrodynamic Simulation, Non-standard Finite Difference, Urban Flood Prediction, Data Integration}

\maketitle

\renewcommand\thefootnote{}
\footnotetext{\textbf{Abbreviations:} UFDSM, Urban Flood Dynamical System Model; APC, antigen-presenting cells; IRF, interferon regulatory factor.}

\renewcommand\thefootnote{\fnsymbol{footnote}}
\setcounter{footnote}{1}

\section{Introduction}\label{sec1}

Flood disaster is one of the major natural disasters in the world \cite{cavallo2013}. The frequency and intensity of global flood disasters exhibit an upward trend due to the impact of global climate change \cite{ipcc2021}. Furthermore, in conjunction with urbanization and industrial agglomeration, the severity of urban flood disasters has escalated significantly \cite{adrem2021}. Hence, effective management of urban flood disasters assumes paramount importance. Urban process simulation stands out as a key technology that finds application in urban planning and design urban reconstruction and expansion, flood disaster forecasting, personnel evacuation strategies, and emergency decision-making. Currently, available flood process simulation models such as Mike Urban, TUFLOW, LISFLOOD, HEC-RAS, and INFOWORKS are relatively mature software options \cite{jain2018}. Most of these software employ finite difference or finite volume methods to solve two-dimensional shallow water equations for simulating surface flooding processes with commendable efficacy \cite{guo2021}. Collectively referred to as traditional two-dimensional hydrodynamic models herein.

However, when it comes to simulating complex urban flooding processes using traditional two-dimensional hydrodynamic models encounter two bottlenecks: 

	\begin{enumerate}[1.]
	\item high computational complexity \cite{bulti2020}; 
	
	\item To adapt to the solving process, it is necessary to greatly abandon the urban system data, such as building, road data, and many other urban data that are rich and relatively easy to obtain.
	
	\end{enumerate}

To reduce complexity and enhance computational efficiency, a novel urban flood model called the Cellular Automaton Urban Flood Model (CAUFM) has emerged in recent years \cite{caviedes2018,  dottori2010,   guidolin2016, jamali2019, kassogue2017, yao2021}. This model divides urban areas into discrete cells where water is transported or distributed based on predefined rules. The water motion or distribution rules are designed to emulate real-world flow processes by incorporating physical laws such as Manning's formula, weir flow formula, hole flow formula, and simplified shallow water equations \cite{bates2010, caviedes2018}, as well as intuitive algorithms like weight-based water transport algorithm \cite{guidolin2016} and excess volume spreading rules \cite{jamali2019}. In this model, the urban ground can be represented by diverse cell types \cite{yao2021} with different runoff process. Furthermore, this model readily integrates with other hydro-hydraulic processes such as infiltration, drainage, and river dynamics \cite{wijaya2023}. The model demonstrates remarkable flexibility, aligning with the desired property and developmental trajectory of contemporary urban hydrology and hydrodynamical models \cite{salvadore2015}.

Current research on CAUFM primarily focuses on refining the design of water motion or distribution rules, especially the introduction of momentum \cite{chang2022}. However, some theoretical and numerical problems of CAUFM have not been solved, and its application potential in urban flooding has not been fully explored:

	\begin{enumerate}[1.]
	\item Why are these rules feasible? Which rules are not feasible?
	
	\item  How can the traditional CAUFM updating strategy be improved to address its poor numerical performance, such as oscillation, instability, negative water depth, and non-conservation of water volume in urban flood simulation?
	
	\item The primary advantage of CAUFM is that traditional two-dimensional hydrodynamic models greatly simplify urban data to adapt to algorithms, while CAUFM greatly absorb urban data through simplifying algorithms. This important potential has not yet been explored in current research.
	
	\end{enumerate}

The CAUFM can be considered as a discrete dynamical system with two core elements: rules and time step. The discrete dynamical system can be transformed into a continuous dynamic system if the time step is close to 0 for the CAUFM preferring physical rules. Inspired by this idea, we propose an Urban Flood Dynamical System Model (UFDSM), which calculates the continuous evolution process of water depth in cities based on designed water flow rules, thus constituting a continuous dynamical system. Starting from this continuous dynamical system allows for convenient analysis of the model's evolution. To determine whether the selected rules are reasonable, this paper provides a sufficient condition based on macroscopic evolution laws and the physical limits of urban floods. Additionally, instead of using Cellular Automata's time step updating strategy similar to the explicit Euler method which exhibits poor numerical performance, we design a non-standard finite difference algorithm (NSFD) to solve the UFDSM. This method ensures the preservation of positivity in water depth and conservation of water volume while maintaining consistency between numerical fixed points and fixed points within the dynamical system.

The paper is structured as follows: Section \ref{sec2} constructs the theory of UFDSM and provides constraint conditions for cellular rules to ensure that the model adheres to actual physical constraints of urban flood evolution. Section \ref{sec3} presents a first-order conservative nonstandard finite difference algorithm for solving this type of model, discussing numerical stability conditions and elementary stable conditions of the method. In section \ref{sec4}, we apply the proposed urban flood dynamic system model and numerical method to two constructed examples, comparing simulation results with analytical solutions and HEC-RAS simulations.

\section{Construction of the theory of UFDSM}\label{sec2}

The urban region under consideration is regarded as a bounded closed region $D\in R^2$ with water depth distribution $h\left(x,y\right)\geq0$ and elevation distribution $Z(x,y)\in\ R$, both of which are integrable. The region $D$ is divided into $n$ small regions referred to as cells. The average water depth and its evolution variation over time in the $i-th$ cell are respectively:

\begin{equation}
	\label{eq:2.1}
	\begin{gathered}
		h_i=\frac{1}{S_i} \int_{{cell }_i} h(x, y) d x d y , \\
		\frac{d h_i}{d t}=\frac{F_i(t)}{S_i},
	\end{gathered}
\end{equation}
where $h_i$, $S_i$ and $F_i(t)$ are respectively the average water depth, area and total water inflow flux of the $i-th$ cell. 

In the urban physical field, the total flux $F_i$ in the equation \ref{eq:2.1} is governed by both physical laws and the conditions of the urban system. Traditional two-dimensional hydrodynamic models neglect the wealth of easily accessible urban data, prioritizing algorithms over data. Conversely, CAUFIMs simplify physical equations significantly while incorporating as much urban system data as possible by defining multiple cell types and diverse water interaction rules.
Based on a review of existing CAUFIMs and drawing inspiration from Cellular Automata, we have derived two fundamental hypotheses that serve as the theoretical foundation for constructing the UFDSM presented in this paper.

\textbf{Hypothesis H1:} The water flux between two neighboring cells is solely determined by the water depth and the link properties of the two cells, implying that

\begin{equation}
	F_i\left(t\right)=F\left(h_i,h_{N_i},P_{N_i}\right), \nonumber
\end{equation}
where $N_i=\left[{cell}_{i_1},{cell}_{i_2},\ldots,{cell}_{i_k}\right]$ is the neighboring cells index set of the $i-th$ central cell, $h_{N_i}=\left[h_{i_1},h_{i_2},\ldots,h_{i_k}\right]$, where $i_1, i_2,\ \ldots,i_k\in N_i$, and $P_{N_i}$ is the set of link properties of the central cell and its neighboring cells, where the link properties may include elevation, common-side length, area, etc.

\textbf{Hypothesis H2:} The flux across each boundary of the cell is mutually independent, and the total flux is obtained through the linear superposition of individual boundary fluxes, thereby implying:

\begin{equation}
	F_i\left(t\right)=\sum_{j\in N_i}-f_{ij}\left(h_i,h_j,P_{ij}\right), \nonumber
\end{equation}
where $P_{ij}$ is the link properties between ${Cell}_i$ and ${Cell}_j$, $f_{ij}$ is the water outflow flux from ${Cell}_i$ to ${Cell}_j$, which is a bound function.

The Hypothesis H1, derived from the concept of Cellular Automata, constitutes one of the fundamental distinguishing features of the UFDSM in contrast to other hydrodynamic models. It is assumed that the temporal variation in water depth within the central cell is only determined by the water depth and connection properties of the central cell and its neighboring cells. The predominant driving force for water flow arises from disparities in water levels or gravitational potential energy. Negligible attention is given to momentum due to numerous frictions and obstacles present within complex urban environments, resulting in a conversion of kinetic energy into potential energy to some extent. Building upon Hypothesis H1 and Hypothesis H2, equation \ref{eq:2.1} can be reformulated as a dynamical system characterized by:

\begin{equation}
	\label{eq:2.2}
	\begin{gathered}
		\frac{d h_i}{d t}=\frac{1}{S_i} \sum_{j \in N_i}-f_{i j}\left(h_i, h_j, P_{i j}\right)=\frac{A_i \mathbf{1}}{S_i}, \\
		h_i(0) \geq 0 ,
	\end{gathered}
\end{equation}
where $A=\left(a_{ij}\right)=\left\{\begin{matrix}-f_{ij},&j\in N_i\\0,&j\notin N_i\\\end{matrix}\right.$, which is termed as rules matrix, $A_i=\left[a_{i1},a_{i2},...,a_{in}\right]$ and $ \mathbf{1}=[1,1,\ldots,1]T$. In the subsequent discussion, we assume that the dynamical system \ref{eq:2.2} is well-posed given matrix A, although proving this may challenge. The selection of rules matrix A can be flexible based on the actual urban environment but should have certain limitations to approximate the urban flood process as discussed below. 

From a macro perspective, water movement within a closed urban system should adhere to fundamental physical constraints such as non-negative average water depth, conservation of total water volume, non-increase in total gravitational potential energy of water, and convergence of average water depth towards an invariant distribution, which can be represented respectively as:

\begin{equation}
	\label{eq:2.3}
	h_i \geq 0, \quad\|V\|_1=\|V_0\|_1,\quad \frac{dW}{dt} \leq 0, \quad \lim_{t \to \infty} h = \bar{h},
\end{equation}
where $V=[V_1,V_2,\ ...,V_n]$ is the vector of cellular water volumes, $\|*\|_1$ denotes the 1-norm of a vector, $W$ signifies the total gravitational potential energy of water in D, $h=[h_1,h_2,\ ...,h_n]$ represents average water depth vector and $\bar{h}$ is the final invariant distribution corresponding to $h$. Note that the initial water velocity distribution is assumed to be zero and the reference point for gravitational potential energy is defined at $Z=0$.

To ensure non-negativity and conservation conditions as stated in \ref{eq:2.3}, Lemma 2.1 provides an evident sufficient condition for constraining rules matrix A. From a physical perspective, this implies that if inflow and outflow have equal magnitudes but opposite directions at a given boundary, then total water volume remains conserved; furthermore, if outflow ceases when water depth reaches zero, negative values for water depth are prevented.

\textbf{Lemma 2.1:} Suppose $\left|a_{ij}\right|<+\infty$. If $A=-A^T$, then the solution of dynamical system \ref{eq:2.2} remains conserved, that is $\|V(t)\|_1=\|V_0\|_1$. If $a_{ij}\geq 0$ when $h_i \le 0$, then the solution of the dynamical system \ref{eq:2.2} is non-negative. 

To meet the condition in \ref{eq:2.3} of non-increase in total gravitational potential energy, it is necessary to calculate the total gravitational potential energy function $W(t)$ of water in region $D$. The total gravitational potential energy function of water $w_i$ in ${Cell}_i$ can be written as:

\begin{equation}
		\label{eq:2.4}
w_i=\iint_{{cell}_i}\left[\rho g h(x, y) z(x, y)+\frac{1}{2} \rho g h(x, y)^2\right] d x d y
\end{equation}

To make the integral in \ref{eq:2.4} computable, assumption A1 is introduced, which is not unique.

\textbf{Assumption A1:} $z\left(x,y\right)$ and $h\left(x,y\right)$ is flat (piece-wise constant) in the cell. 

Based on assumption A1, a sufficient condition is given by Lemma 2.2 to ensure the non-increase in total gravitational potential energy of water in D.

\textbf{Lemma 2.2:} Given rules matrix A satisfying the Lemma 2.1, if $H_i\geq H_j$ when $a_{ij}\le 0$, then the solution of dynamical system \ref{eq:2.2} meets non-increase in total gravitational potential energy of water in $D$. 

\begin{proof}[Proof of Lemma 2.2]
	
From the function \ref{eq:2.4} and the assumption A1, Wt can be derived as

\begin{equation}
W\left(t\right)=\sum_{i=1}^{n}  w_i=\sum_{i=1}^{n} \rho g\left(z_ih_i+1/2h_i^2\right)S_i \nonumber
\end{equation}

Take the derivative of time on both sides of the equal sign: 

\begin{equation}
\frac{dW\left(t\right)}{dt}=\rho g\sum_{i=1}^{n}  S_iH_i\frac{dh_i}{dt}=\rho g\sum_{i=1}^{n}  A_iIS_iH_i/S_i=\rho g\sum_{i=1}^{n} \sum_{j=1}^{n}  H_ia_{ij} \nonumber
\end{equation}

According to $A=-A^T$ from Lemma 2.1, we get $H_ia_{ij}+H_ja_{ji}=\left(H_i-H_j\right)a_{ij}$.
Let $L=\left\{(i,j)|a_{ij}<0\ or\ (a_{ij}=0\ and\ i<j)\right\}$, then $dW\left(t\right)/dt$ can be rewrite as the following form:

\begin{equation}
	\label{eq:2.5}
	\frac{d W(t)}{d t}=\rho g \sum_{(i, j) \in L}\left(H_i-H_j\right) a_{i j}
\end{equation}

It is clear that if $H_i\geq H_j$ when $a_{ij}\le 0$ for all $i$ and $j$, then $dW\left(t\right)/dt\le 0$ holds. 

By combining Lemma 2.1 and Lemma 2.2, we propose the following theorem, which ensures that the solution of the dynamical system \ref{eq:2.2} could satisfy all physical conditions in \ref{eq:2.3}. 

\end{proof}

\begin{theorem}\label{thm1}
	If the following five conditions for the rules matrix hold, then the four physical conditions in \ref{eq:2.3} for the solution of the dynamical system \ref{eq:2.2} all hold. 
	
	\begin{enumerate}[a.]
		\item $A=-A^T$, 
		
		\item $\left|a_{ij}\right|<+\infty$,
		
		\item if $H_i>H_j$ and $h_i\le 0$, then $a_{ij}=0$,
		
		\item if $H_i>H_j$ and $h_i>0$, then $a_{ij}<0$, 
		
		\item if $H_i=H_j$, then $a_{ij}=0$.
		
	\end{enumerate}
	
\end{theorem}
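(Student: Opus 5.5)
The proof has two parts. The first three physical conditions in \ref{eq:2.3} follow by checking that conditions a--e meet the hypotheses of Lemma 2.1 and Lemma 2.2. The fourth condition, convergence to an invariant distribution, needs a Lyapunov argument built on the energy $W$.

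\textbf{Non-negativity and conservation.} Conditions a and b are exactly the hypotheses of the conservation part of Lemma 2.1.

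For non-negativity, suppose $h_i\le 0$ and fix a neighbour $j$.
\begin{itemize}
\item If $H_i>H_j$, condition c gives $a_{ij}=0$.
\item If $H_i=H_j$, condition e gives $a_{ij}=0$.
\item If $H_i<H_j$, antisymmetry gives $a_{ij}=-a_{ji}$. If $h_j>0$, condition d gives $a_{ji}<0$, so $a_{ij}>0$. If $h_j\le 0$, condition c gives $a_{ji}=0$, so $a_{ij}=0$.
\end{itemize}
In every case $a_{ij}\ge 0$, so Lemma 2.1 yields $h_i\ge 0$.

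\textbf{Non-increasing energy.} I check the hypothesis of Lemma 2.2: $a_{ij}\le 0$ should imply $H_i\ge H_j$. Suppose instead $H_i<H_j$. The case analysis above (applied with the roles of $i$ and $j$ swapped) gives $a_{ij}=-a_{ji}\ge 0$. So $a_{ij}\le 0$ forces $a_{ij}=0$. This happens only when $h_j\le 0$, so $h_j=0$ by non-negativity.

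In that subcase the term $(H_i-H_j)a_{ij}$ in \ref{eq:2.5} vanishes anyway. I would therefore argue directly from \ref{eq:2.5}, using $H_ia_{ij}+H_ja_{ji}=(H_i-H_j)a_{ij}$. Every pair contributes $(H_i-H_j)a_{ij}\le 0$, because a nonzero $a_{ij}$ always has sign opposite to $H_i-H_j$. Hence $dW/dt\le 0$.

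\textbf{Convergence to $\bar h$.} I would use $W$ as a Lyapunov function.
\begin{enumerate}
\item Non-negativity and conservation confine the trajectory to the compact simplex $\{h\ge 0,\ \sum S_ih_i=\|V_0\|_1\}$.
\item $W$ is bounded below on this set and non-increasing along trajectories.
\item By LaSalle's invariance principle, which requires continuity of the $f_{ij}$ (implicit in the assumed well-posedness), the $\omega$-limit set lies in the largest invariant subset of $\{dW/dt=0\}$.
\item By \ref{eq:2.5}, $dW/dt=0$ forces $(H_i-H_j)a_{ij}=0$ for every neighbouring pair. By condition d, whenever $H_i>H_j$ we then need $h_i=0$. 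By conditions c and e, all $a_{ij}=0$ on this set, so every point of it is an equilibrium.
\end{enumerate}

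\textbf{Main obstacle.} The hardest step is passing from "the $\omega$-limit set consists of equilibria" to "$\lim_{t\to\infty}h=\bar h$ exists". The equilibria form a continuum of states, namely piecewise flat water levels with dry cells above their wet neighbours. I would first try to show that the equilibrium reachable from given initial data is unique on its conservation level set. For example, on a connected wet component $H$ is constant, and the total volume fixes that constant given which cells are dry.

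Failing uniqueness, I would argue as follows. The $\omega$-limit set is connected, and $W$ is constant on it. Combining these with the finitely many possible wet/dry patterns, and with the fact that within each pattern the level $H$ is determined by volume, forces the $\omega$-limit set to be a single point $\bar h$.
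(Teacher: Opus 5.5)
Your proof of non-negativity, conservation and $dW/dt\le 0$ is correct and follows the paper's route through Lemmas~2.1--2.2. Arguing termwise from (\ref{eq:2.5}) is in fact the right way to get the energy inequality, because the literal hypothesis of Lemma~2.2 fails whenever $a_{ij}=0$ with $H_i<H_j$. Your LaSalle step is also sound. It replaces the paper's ``$W$ is bounded below, so $dW/dt\to 0$'', which needs more than boundedness, and correctly places the $\omega$-limit set $\omega(h_0)$ inside the paper's set $S$ of equilibria. One small correction: LaSalle uses continuity of the solution semiflow, which well-posedness supplies. Continuity of the $f_{ij}$ is not implied by well-posedness; the rule (\ref{eq:4.1}) is discontinuous at $h_i=0$.

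The genuine gap is the final step. Both of your ways through it rely on a false claim: that for a fixed wet/dry pattern the levels are determined by the volume. Only the \emph{total} volume is conserved.
\begin{itemize}
\item \textbf{No uniqueness.} Take $m\ge 2$ wet components $C_1,\dots,C_m$ separated by dry ridge cells. Every split $V_1+\dots+V_m=\|V_0\|_1$ that keeps each level below its ridges is an equilibrium (check (c) and (e) pairwise). So equilibria are not unique on the conservation level set.
\item \textbf{Fixing $W$ does not help.} With $A_k=\sum_{i\in C_k}S_i$ and $H^k=(V_k+\sum_{i\in C_k}S_iz_i)/A_k$,
\[
W=\frac{\rho g}{2}\sum_{k=1}^{m}\Big(A_k(H^k)^2-\sum_{i\in C_k}S_iz_i^2\Big),
\]
which is a strictly convex quadratic in $(V_1,\dots,V_m)$. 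Take three basins whose common level lies below the separating ridges. The equilibria with that pattern, the given volume, and an energy slightly above that of the equal-level state form a closed curve. ``Connected, $W$ constant, finitely many patterns'' does not stop $\omega(h_0)$ from being an arc of that curve.
\end{itemize}

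What is missing is a dynamical argument that rules out drift along such continua. One option is to show each $\bar h\in\omega(h_0)$ is Lyapunov stable, since a stable equilibrium in the $\omega$-limit set is automatically the limit. $W$ cannot give you this. It is strictly convex in $h$ on the convex simplex, so its only local minimiser there is the global water-filling state, not a trapped-basin equilibrium.

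A plausible route is to use the sign conditions near $\omega(h_0)$. There, the dry cells separating different wet components lie at or above the adjacent levels, so by (c)--(d) they can only pass water downhill. This should bound the exchange between components after time $T$ by quantities that tend to zero, forcing the component volumes to converge.

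The paper's own proof does not close this step either. It asserts a unique limit $\bar h\in S$ by appeal to uniqueness of solutions, and uniqueness does not imply convergence.
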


\begin{proof}[Proof of Theorem~{\rm\ref{thm1}}]
	Based on the conditions of the theorem \ref{thm1}, it is easy to derive Lemma 2.1 and Lemma 2.2. Next, only prove the conclusion $\lim_{t \to \infty} h = \bar{h}$ in \ref{eq:2.3}. 
	
	First step, let $S= \{h | if \quad H_i\geq H_j,then  \quad h_i=0  \quad or  \quad H_i=H_j\}$ , then $h\in S$ if and only if $dW(h)/dt=0$. 
	
	(1) Proof of necessity. If $h\in S$, there is $a_{ij}=0$ for all $i$ and $j$, so $dW(h)/dt=0$ holds according to formula \ref{eq:2.5}. 
	
	(2) Proof of sufficiency. If $dW(h)/dt=0$, then $a_{ij}<0$ is impossible according to the theorem conditions. Because, under the condition $a_{ij}<0$, $H_i=H_j$ contradicts the condition (e), $H_i>H_j$ and $h_i=0$ contradicts the condition (c), $H_i<H_j$ and $h_j>0$ contradicts the condition (d), $H_i<H_j$ and $h_j=0$ contradicts the condition (c), $H_i>H_j$ and $h_i>0$ contradicts the $dW(h)/dt=0$. So, if $dW(h)/dt=0$, then $a_{ij}=0$. Under the condition $a_{ij}=0$, $H_i>H_j$ and $h_i>0$ contradicts the condition (d), so there is $H_i \le H_j$ or $h_i=0$. Due to $a_{ji}=0$, there is $ H_i\geq H_j$ or $h_j=0$. So, if $dW(h)/dt=0$, there is $h_i=0$ or $H_i=H_j$, which means $h\in S$. 
	
	Second step, if $h_0\in S$, $\lim_{t \to \infty} h = \bar{h}$ holds obviously. Next suppose $h_0\notin S$. For $\forall h\notin S$, there is $dW(h)/dt<0$. And $W(t)$ has a lower bound, so $\lim_{t \to \infty} dW(h)/dt=0$ holds. So, there exists a unique $\bar{h}\in S$ (Suppose the solution of the dynamic system \ref{eq:2.2} exists and is unique), so that $\lim_{t \to \infty} h = \bar{h}$ holds.  Moreover, there is $d\bar{h}/dt=0$, which means the $\bar{h}$  is a fixed point of the dynamic system \ref{eq:2.2}. If take $W(t)$ as Lyapunov Function, it is obvious that the fixed point $\bar{h}$ is Lyapunov-stable \cite{wiggins2010}. 
	
\end{proof}

The above describes the theoretical basis of the urban flood dynamical system model. Next, the urban flood process modeling is based on this theory, which mainly includes the construction of the urban cell domain and the selection of the flow rules matrix. The urban cell domain uses the diversity of cell types to reflect the diversity of urban ground use types and absorb as much data as possible within the city. The flow rules matrix reflects the complexity of urban flow situations with the diversity of water movement or distribution rules.

Construction of urban cell domain: diversity of cell types. Urban ground cover has an important impact on flood generation, runoff and confluence processes. The ground conditions in urban areas are complex and the types of ground uses are diverse, such as buildings, roads, grasslands, open spaces, water bodies, underground spaces, mountains, etc. Different types of ground use have different processes of flood generation, runoff and confluence. For example, grassland has high water permeability, and the infiltration and interception process is obvious. Road water permeability is low, and the runoff process is significant, and the infiltration process can be ignored. At present, when most flood models are applied to complex urban systems, they describe different ground utilization types, usually using different ground friction coefficients, or using obstacles \cite{luo2022}. In the UFDSM, cells can naturally be set to different types, so that various types of ground use in the city can be easily described, such as building cells and road cells, as shown in Fig\ref{fig1}(a).Different cell types have different attributes, flow generation rules, and water interaction rules. In order to uniformly describe the urban flood process in the cell domain, the UFDSM is also equipped with boundary cells and rain cells. Boundary cells are used to carry various inflow and outflow boundary conditions, and rain cells are used to play the role of rainfall and can express spatially differentiated rainfall. In this paper, boundary cells and rain cells are called virtual cells, and other cells are called real cells. Urban hydrology and water conservancy facilities play an important role in mitigating flood processes, such as urban drainage pipe networks, pumping stations and artificial lake. The urban flood dynamic system model assigns the real cell type whether it has drainage outlet attributes to reflect the drainage effect, and whether it has other source term attributes to increase the degree of freedom and adapt to the real situation. In addition, the algorithm architecture of the UFDSM can easily be expanded to more customized cell types. 

Construction of flow rules matrix: water movement or distribution rule diversity. The urban surface conditions are intricate, and the flow phenomena exhibit a diverse range of characteristics. To enhance the portrayal of complex urban systems, we couple the urban flood dynamic system model with multiple sets of water flux calculation rules. This allows for each edge to be designed with flow rules that are adapted to actual surface conditions, providing great flexibility and powerful expression ability. Common water flux calculation rules include Manning's formula, weir flow formula, hole flow formula, inertial simplified formula of Saint-Venant's equation, dynamic wave equation, etc. It is important to note that when constructing the rules matrix based on the aforementioned calculation rules, it is necessary to satisfy the conditions stated in Theorem \ref{thm1}. Therefore, adjustments may need to be made to these rules if required.

The construction process of the urban flood dynamic system model for the simulated urban area (with as simple a boundary as possible) is as follows (see Fig \ref{fig1}):

\begin{figure*}[t]
\centerline{\includegraphics[width=0.7\textwidth]{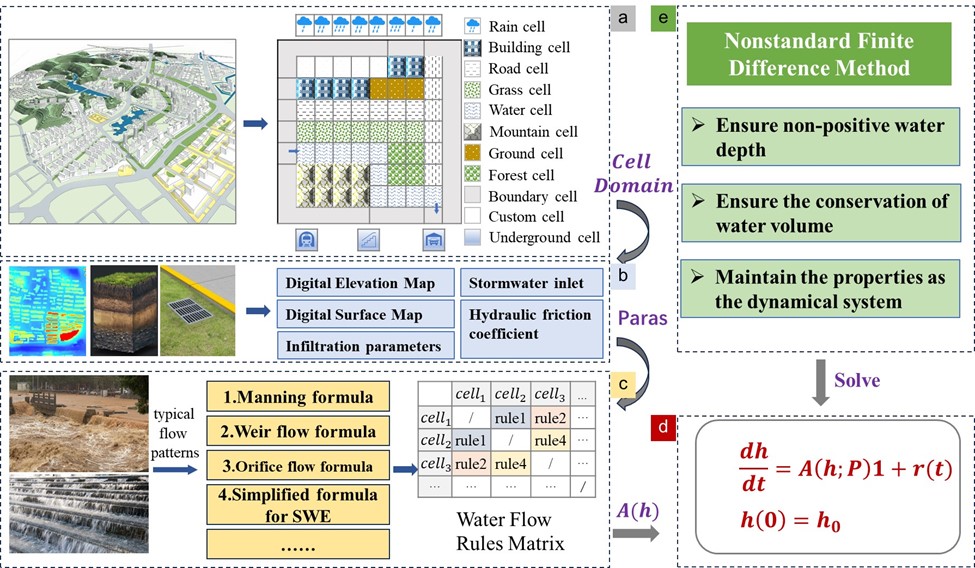}}
\caption{The simulation framework of the urban flood dynamical system model. (a) Construction of the cell domain, (b) Setting cell properties and parameters, (c) Construction of the flow rules matrix,  (d) Urban flood dynamical system model equations, (e) Non-standard finite difference algorithm, used to solve the constructed urban flood dynamical system equations.\label{fig1}}
\end{figure*}

 (1) Divide the city into various types of cells based on the urban ground coverage map (such as a city street map), such as building cells, road cells, underground shopping malls or subway cells, etc., while setting up boundary cells, rainfall cells, and special custom cells. The shape of the cell boundary can be irregular, but regular boundaries can simplify the calculation process. After division, a cell domain $CD={cell_1,cell_2,\ldots,cell_n}$ and a type set $type={building\ cell,road\ cell,\ldots}$ are formed.  For any $cell_i\in CD$, $cell_i$ has a type attribute, that is, $type\left(cell_i\right)\in type$.
 
 (2) Set the elevation of each cell based on the Digital Elevation Model (DEM) or\\and Digital Surface Model (DSM), set the infiltration parameters and hydraulic friction coefficients of the cells based on the urban underlying surface properties, set the drainage source term based on whether there is a drainage outlet, and set other source terms based on actual conditions. After the settings are completed, a cell parameters set $P= \{parameters (cell) |cell\in CD\}$ is formed.
 
(3) Based on the actual flow phenomenon between neighboring cells, construct flow rules matrix $A(h,P)$ and make the matrix satisfy the conditions required by Theorem \ref{thm1}.

(4) Establish the dynamical system model equation, $\frac{dh}{dt}=A\left(h,p\right)\mathbf{1}+r\left(t\right)$, and specify its initial condition $h\left(0\right)=h_0$.

(5) Based on the proposed non-standard finite difference algorithm to solve the dynamical system, the algorithm can maintain the non-negativity of water depth, the conservation of  water volume, and remain consistent with the evolutionary properties of the dynamical system.

\section{Construction of the numerical method}\label{sec3} 

 Once constructing rules matrix A based on the idea of Cellular Automata, a dynamical system is established. Consequently, how to numerically solve the dynamical system is no longer limited by the original updating strategy of Cellular Automata \cite{nagel1992,wolfram1984}. In the subsequent section, we also suppose that the dynamical system \ref{eq:2.2} is well-posed \cite{wiggins2010}. To ensure that the numerical solution algorithm for dynamical system \ref{eq:2.2} adheres as closely as possible to physical constraints in \ref{eq:2.3}, a class of the non-standard finite difference methods is selected here, which can preserve the water depth positivity of the evolution process, conservation of water volume, the final invariant distribution and the local property of the fixed point \cite{wood2015}.
 
 A finite difference scheme approximating dynamical system \ref{eq:2.2} can be written as\cite{wood22015}: 
 
\begin{equation}
 	\label{eq:3.1}
 D_\tau\left(h_i^k\right)=F_\tau\left(A \mathbf{1} ; h_i^k\right),
\end{equation}
where $\left.D_\tau\left(h_i^k\right)\approx\frac{dh_i}{dt}\right|_{t=t_k}, h_i^k\approx h_i\left(t_k\right), F_\tau\left(A\mathbf{1};h_i^k\right) approximates A\mathbf{1}/S_i$ in the dynamical system \ref{eq:2.2}, $t_k=t_0+k\tau$, where $h>0$.

With reference to this research \cite{wood2015}, the numerical method based on the non-standard finite difference frame approximating the dynamical system \ref{eq:2.2} is established as follows:

 \begin{equation}
	\label{eq:3.2}
	\frac{h_i^{k+1}-h_i^k}{\phi(\tau)}=\left\{\begin{array}{cl}
		A_i^k \mathbf{1} / S_i, & if A_i^k \mathbf{1} \geq 0 \\
		\frac{h_i^{k+1}}{h_i^k} A_i^k \mathbf{1} / S_i, & if A_i^k \mathbf{1}<0
	\end{array},\right.
\end{equation}
where $\mathrm{\ }A_i^k=\ A_i\left(h^k\right)$ and $A_i=\left[a_{i1},a_{i2},...,a_{in}\right]$.

Rewrite it in explicit form:

\begin{equation}
	h_i^{k+1}=\left\{\begin{array}{cl}
		h_i^k+\phi(\tau) A_i^k 1 / S_i, & \text { if } A_i^k I \geq 0 \\
		h_i^k+\frac{h_i^k \phi(\tau) A_i^k 1 / S_i}{h_i^k-\phi(\tau) A_i^k 1 / S_i}, & \text { if } A_i^k I<0
	\end{array}\right. \nonumber
\end{equation}

Daniel T. Wood has demonstrated that the numerical method \ref{eq:3.2} is positive and has elementary stable properties under certain conditions \cite{wood2015}. The definitions of positive and elementary stable are quoted here. 

The finite difference method \ref{eq:3.1} is called positive, if, for any value of the step size $\tau$ and initial location $h_0 \in R_+^n$, the iterated sequence solution  $\{h^k\}_{k=1}^{\infty}$ of the numerical method \ref{eq:3.2} has $h \in R_+^n$ for any $k$, where $R_+^n=\left\{(x_1,x_2,...,x_n)^T|x_i\geq0,i=1,2,...,n\right\}$. 

The finite difference method \ref{eq:3.1} is referred to as elementary stable if, for any value of the step size $\tau$, its fixed point $\bar h$ are the same as the equilibria of the differential system \ref{eq:2.2} and the local stability properties of each $\bar h$ are the same for both the differential system and the difference method.

Let $\varphi(\tau)$ be a real-valued function which satisfies $0<\varphi\left(\tau\right)<1$  for all $\tau>0$ and $\varphi\left(\tau\right)=\tau+O(\tau)$. Let $\varphi(h)=\phi(q\tau)/q$. if taking $q>Q$, the numerical method \ref{eq:3.2} can be elementary stable, where $Q$ satisfies:

\begin{equation}
	\label{eq:3.3}
	Q \ge \max \{\frac{| \lambda | ^2}{2|Re \lambda |}\},
\end{equation}
where $\lambda$ is the eigenvalue of Jacobian matrix of the dynamical system \ref{eq:2.2}.

More detailed definitions and theories of non-standard finite difference methods can be referred to those articles \cite{anguelov2001, lubuma2005, mickens1994}.

The inadequacy of this format in ensuring water volume conservation is evident, as the computed flux of upstream and downstream cells on both sides of the same boundary exhibits inconsistency. To ensure water volume conservation, an upwind splitting approach is employed for each boundary, where the outflow from upstream is considered as the inflow to downstream.

Split the rules matrix $A^k$ into $A^{k+}$ and $A^{k-}$, that is $A^k=A^{k+}+A^{k-}$, where $A^{k+} $ representing inflow flux $(a_{ij}^{k+}>0)$ and $A^{k-}$ representing the outflow flux $(a_{ij}^{k-}\le0)$. Taking splitting calculation, firstly, the outflow flux (non-positive) during $[t_k,t_k+\tau]$ is calculated as follows based on upwind idea:

\begin{equation}
	f_i^{k\mathrm{\ (out)\ }}=\frac{\phi\left(\tau\right)A_i^{k-}\mathbf{1}/S_i}{h_i^k-\phi\left(\tau\right)A_i^{k-}\mathbf{1}/S_i}h_i^k=\frac{\phi\left(\tau\right)\sum_{j=1}^{n}  a_{ij}^{k-}/S_i}{h_i^k-\phi\left(\tau\right)A_i^{k-}\mathbf{1}/S_i}h_i^k  \nonumber
\end{equation}

Let $G^k=(g_{ij})^k=\frac{a_{ij}^{k-}/S_i}{h_i^k-\phi\left(\tau\right)A_i^{k-}I/S_i}h_i^k$, then $f_i^{k\mathrm{\ (out)\ }}=\phi\left(\tau\right)G_i^k\mathbf{1}$

Next, the inflow flux is obtained from the just calculated outflow flux:

\begin{equation}
	f_i^{k\left(in\right)\mathrm{\ } }=-\phi\left(\tau\right){{(G}^{k\top})}_i\mathbf{1}, \nonumber
\end{equation}
where $G^{k\top}$ is the transpose of $G^k$.

Finally, the next step water depth can be updated in the following formula:

\begin{equation}
	\label{eq:3.4}
	h^{k+1}=h^k+\phi(\tau)(G^k-G^{k\top})\mathbf{1}
\end{equation}

Let $U^k=(u_{ij})^k=\frac{a_{ij}^{k-}/S_i}{h_i^k-\phi\left(\tau\right)A_i^{k-}I/S_i}$, and $\Lambda^k$ be the diagonal matrix spanned by the vector$ U^k\ \mathbf{1}$. Then, the formula \ref{eq:3.4} could be rewritten as:

\begin{equation}
	h^{k+1}=h^k+\phi\left(\tau\right)\left(\Lambda^k-U^{k\top}\right)h^k=\left[E+\phi\left(\tau\right)\left(\Lambda^k-U^{k\top}\right)\right]h^k  \nonumber
\end{equation}

An analysis is conducted on the local truncation error, convergence, and compatibility of the numerical method \ref{eq:3.4}. We have reached the following conclusion. Given the initial water depth $h^0 \in R_+^n$, if the area $S$ of every cell in domain $D$ is the same, the numerical method \ref{eq:3.4} is convergent, and its local truncation error is $O (\tau^2)$. Prove as follows.

The local truncation error $R_k$ can be presented as:

\begin{equation}
	R_i^k=h_i\left(t_{k+1}\right)-h_i^{k+1}=h_i\left(t_{k+1}\right)-h_i\left(t_k\right)-\phi\left(\tau\right)\left(G_i^k-G_i^{kT}\right)\mathbf{1} \nonumber
\end{equation}

Since $A=-A^\top$ and $A^k=A^{k+}+A^{k-}$, there is $a_{ij}^+=-a_{ji}^-$. Suppose that rules matrix $A$ is sufficiently smooth, and take $\phi\left(\tau\right)=\tau+O\left(\tau^2\right)$ based on the NSFD frame. Then, according to Taylor formula, the $R_k$ can be expanded and simplified as follow:

\begin{equation}
	R_i^k=\tau\left[A_i/S_i-G_i^k+(G^{kT})_i\right]\mathbf{1}+\left[1-G_i^k\mathbf{1}+\left(G^{kT})_i\mathbf{1}\right)\right]O(\tau^2),
	\nonumber
\end{equation}
where $A_i/S_i-G_i^k+(G^{kT})_i$ can be expanded as:

\begin{equation}
	A_i/S_i-G_i^k+(G^{kT})_i=\frac{h_j^ka_{ji}^-(S_i-S_j)/(S_iS_j)}{h_j^k-\phi(\tau)A_j^{k-}\mathbf{1}/S_j}-\phi(\tau)\left[\frac{a_{ij}^-A_i^{k-}\mathbf{1}/S_i^2}{h_i^k-\phi(\tau)A_i^{k-}\mathbf{1}/S_i}-\frac{a_{ji}^-A_j^{k-}\mathbf{1}/(S_iS_j)}{h_j^k-\phi(\tau)A_j^{k-}\mathbf{1}/S_j}\right]
	\nonumber
\end{equation}

If $S_i=S_j$ for any $i$ and $j$, we get

\begin{equation}
	R_i^k=\tau\phi(\tau)\sum_{j\in N_i} \left[\frac{-a_{ij}^-A_i^{k-}\mathbf{1}/S_i^2}{h_i^k-\phi(\tau)A_i^{k-}\mathbf{1}/S_i}+\frac{a_{ji}^-A_j^{k-}\mathbf{1}/(S_iS_j)}{h_j^k-\phi(\tau)A_j^{k-}\mathbf{1}/S_j}\right]+\left[1-G_i^k\mathbf{1}+\left(G^{kT})_i\mathbf{1}\right)\right]O(\tau^2)
	\nonumber
\end{equation}

If $h_i^k=0$, there is $a_{ij}^-=0$, and if $h_j^k=0$, there is $a_{ji}^-=0$. So, we get that $R_k=O\left(\tau^2\right)$, which implies the numerical method \ref{eq:3.4} is compatible and convergent.

Next, conduct an analysis on the stability of the numerical method \ref{eq:3.4}. Firstly, select a model equation. Consider a rectangular region $D$ consisting of square cells, with an initial water depth distribution $h\left(0\right)>0$ and a flat elevation distribution. Define the evolution rules as follows:

\begin{equation}
	\label{eq:3.5}
	\frac{d h_i}{d t}=-\frac{c l}{s} \sum_{j \in N_i}\left(h_i-h_j\right), \quad \frac{d h}{d t}=-\frac{c l}{s} B h,
\end{equation}
where $s$ is the cell area, $l$ is the boundary length of the cell, $c$ is the outflow coefficient, and the right-hand equation is the vector form where $B$ is denoted by:

\begin{equation}
	b_{i j}=\left\{\begin{array}{cc}
		-1, & j \in N_i \\
		0, & j \notin N_i \\
		\left|N_i\right|, & j=i
	\end{array},\right.
	\nonumber
\end{equation}
where $\left|N_i\right|$ represents the total number of neighborhoods of ${Cell}_i$.

For the model equation \ref{eq:3.5}, based on the numerical method \ref{eq:3.4}, define the $w_i$ and matrix $D$ as

\begin{equation}
	\begin{gathered}
		w_i=\frac{h_i^k}{h_i^k+\phi(\tau) c l \sum_{j \in N_i}} \max \left\{h_i-h_j, 0\right\} / s \\
		D=d_{i j}= \begin{cases}-w_i, & j \in N_i \& h_i \geq h_j \\
			-w_j, & j \in N_i \text \& h_i<h_j \\
			-\sum_{j \in N_i} d_{i j}, & i=j\end{cases}
	\end{gathered}
	\nonumber
\end{equation}

Obviously, $w_i\le 1$, and $D$ is a symmetric matrix. Then, the numerical equation \ref{eq:3.4} corresponding to the model equation \ref{eq:3.5} can be written as $h^{k+1}=h^k-\phi\left(\tau\right)clD^kh^k/s$. The corresponding error form is

\begin{equation}
	\label{eq:3.6}
	e^{k+1}=e^{k}-\phi(\tau)cl(\bar D \bar h-D^k h^k)/s,
\end{equation}
where the $\bar{h}$ is$ h^k$ with errors, $D^k=D\left(h^k\right), \bar{D}=D(\ \bar{h})$. Due to $\bar{D}\approx D^k$, the error equation \ref{eq:3.6} can be rewritten as $e^{k+1}=\left(E-\phi\left(\tau\right)clD^k/s\right)e^k$, where $E-\phi\left(\tau\right)clD^k/s$ is a symmetric matrix. To meet stability requirements, there is spectral radius $\rho\left(E-\phi\left(\tau\right)clD^k/s\right)\le1$. Then, we get a stability condition:

\begin{equation}
	\label{eq:3.7}
	\phi(\tau)\le\frac{2s}{cl\rho(B)}\le \frac{2s}{cl\rho(D^k)}.
\end{equation}

Since $\rho\left(B\right)$ is less than arbitrary induced matrix norm and $\|B\|_1=8$, then $\phi\left(\tau\right)$ can be estimated as $\phi\left(\tau\right)\le0.25s/cl$.

In addition to numerical stability, the numerical equation \ref{eq:3.4} should also be elementary stable. The gradient of model equation \ref{eq:3.2} with respect to h at the fixed point is $-clB/s$, which is a real symmetric matrix with spectral radius $cl\rho\left(B\right)/s$. According to the conditions requirements of elementary stable (see Formula \ref{eq:3.3}), $q$ should meet

\begin{equation}
	q>Q\ge \max\{\frac{|\lambda|^2}{2|Re(\lambda)|}\}=\frac{cl\rho(B)}{2s}.
\nonumber
\end{equation}   

Obviously, if $q$ satisfies this condition, $\phi(\tau)$ naturally satisfies the condition \ref{eq:3.7}. So, if $q>Q$, the numerical method \ref{eq:3.4} has both numerical stability and elementary stable when solving the model equation \ref{eq:3.5}.

\section{Numerical experiments}\label{sec4}

To validate the proposed numerical method \ref{eq:3.4} for computing urban flood dynamical systems \ref{eq:2.1}, two numerical experiments were conducted in this chapter. The first numerical experiment is based on the model equation \ref{eq:3.5}, which has analytical solutions. In the second numerical experiment, the model and the numerical method are applied to the flood simulation in small urban areas, with a comparison made against the flood simulation software HEC-RAC \cite{suriya2012}.

\subsection{Experiment with analytical solutions}

An 80m $\times$ 80m square area is regularly divided into 8 $\times$ 8 cells with cell size 10m $\times$ 10m  (as shown in Fig \ref{fig2}). The upper-left corner cell number is (0, 0), and the lower-right corner cell number is (7, 7). Based on model equation \ref{eq:3.5}, the evolution rules of water depth is defined as:

\begin{equation}
	\label{eq:4.1}
	\frac{d h}{d t}=\frac{d H}{d t}=\left\{\begin{array}{cl}
		0 & , h=0 \& B H>0 \\
		\frac{-c l}{s} B H & , \text { else }
	\end{array},\right.
\end{equation}
where $c=0.5m^2/s$, $s=100m^2$, $l=10m$, $H=z+h$. The elevation distribution z and initial water depth distribution $h_0$ in the region are deliberately designed (see Fig \ref{fig2}(a) and Fig \ref{fig2}(b)) so that the analytical solution of equation \ref{eq:4.1} can be obtained. As time progresses, the depth of the cell (0, 0) will decrease to 0 near 7.7s, after which it remains constant at zero. The water depths of other cells remain above zero throughout the entire duration, with a final convergence towards a level of 0.9m.

The analytical solution for this example is presented as follows:

\begin{equation}
	\nonumber
	H(t)=\left\{\begin{array}{cl}
		Q E(\Lambda) Q^T H_0 & , h_0>0 \\
		P E\left(\Lambda_{\mathrm{p}}\right) P^T H_{t_m} & , h_0=0
	\end{array},\right. 
\end{equation}
where $\mathrm{\Lambda}$ is the eigenvalue matrix of $B$, $Q$ is the corresponding orthogonalized eigenvector matrix,

\begin{equation}
	\label{eq:4.2}
	E(\Lambda)_{i j}=\left\{\begin{array}{cc}
		\exp \left(-\frac{c l}{s} \lambda_i t\right) & , i=j \\
		0 & , i \neq j
	\end{array},\right.
\end{equation}
where $\lambda_i$ is the $i-th$ diagonal element of the eigenvalue matrix $\mathrm{\Lambda}$, $t_m$ is the moment when the water depth of  the cell (0, 0) is zero for the first time, $\mathrm{\Lambda}_p$ is the eigenvalue matrix of the adjacency matrix $B_p$, $B_p$ is the new adjacency matrix of the cell domain after the removal of the  cell (0, 0), $P$ is the eigenvector matrix of $B_p$ corresponding to $\mathrm{\Lambda}_p$, $E\left(\Lambda_p\right)$ is synonymous with the formula \ref{eq:4.2}, $H_{t_m}$ is the cellular water level at $t_m$, excluding the cell (0, 0).

In addition to the exact solution, the 4-level, 4-order explicit Runge-Kutta method \cite{shampine1997} is used to solve the dynamic system \ref{eq:4.1} and compared with the numerical method \ref{eq:3.4}. Because the 4-level, 4-order explicit Runge-Kutta method may yield negative water depth, we set the water depth of a cell to 0 if the water depth of a cell is negative. Set the time step $\tau=1s$. In the numerical method \ref{eq:3.4}, we utilize $\phi(\tau) = [1 - \exp(-q\tau)]/q$ and $q =0.3$ satisfying the condition \ref{eq:3.7}. In Fig \ref{fig2}(c), four cells located on the diagonal line are selected as comparison points, numbered as (0,0), (1,1), (4,4), and (7,7). As depicted in the figure, all three results are consistent with each other. On the comparison cells, the maximum error $\|H-H_{NSFD}\|_{\infty}$ between the water depth calculated by the numerical algorithm \ref{eq:3.1} and the exact solution are found to be 0.0014m, 0.0005m, 0.0011m, 0.0006m respectively. By employing modified Runge-Kutta method, the maximum error $\|H-H_{RK}\|_{\infty}$ between the calculated water depth and exact solution are found to be $4.0\times10^{-7}m, 1.1\times{10}^{-5}m, 1.1\times{10}^{-6}m,\ 6.4\times{10}^{-7}m$ respectively. The aforementioned magnitude of errors arises due to difference in numerical precision levels; while numerical algorithm \ref{eq:3.1} has first-order precision, the Runge-Kutta method exhibits 4-order precision. However, in actual complex urban flooding flow scenarios, a large number of cellular water depths may change to zero, resulting in significant uncertainties associated with this modified Runge-Kutta algorithm. It is important to note that numerical algorithm \ref{eq:3.1} naturally maintains positive properties of solutions.

\begin{figure}[htbp!]
	\centering  
	\begin{subfigure}{0.48\textwidth}  
		\centering
		\includegraphics[height=15pc]{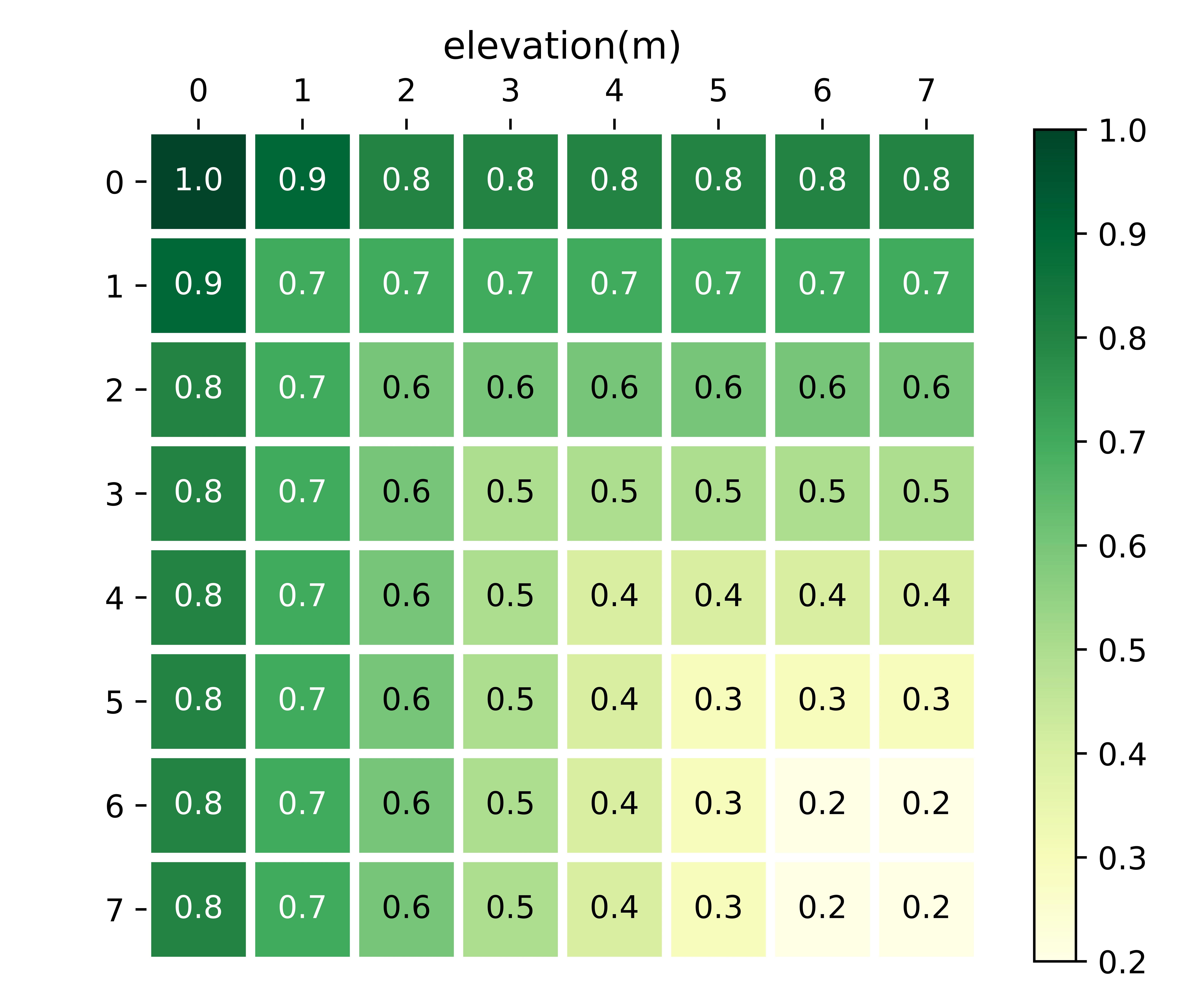}
		\subcaption{ }
	\end{subfigure}
	\hfill  
	\begin{subfigure}{0.48\textwidth}
		\centering
		\includegraphics[height=15pc]{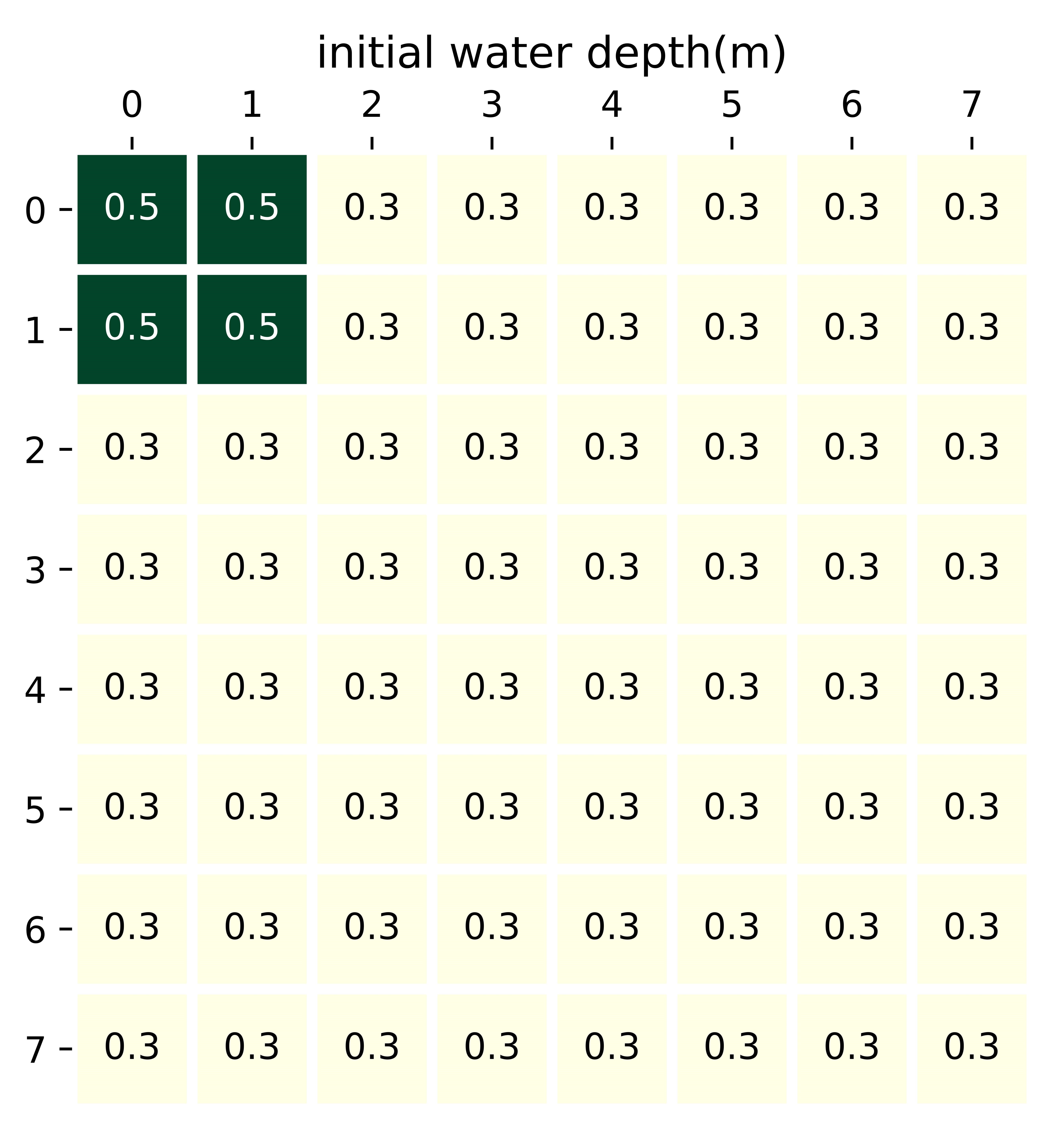}
		\subcaption{ }
	\end{subfigure}
	
	\begin{subfigure}{0.48\textwidth}
		\centering
		\includegraphics[height=15pc]{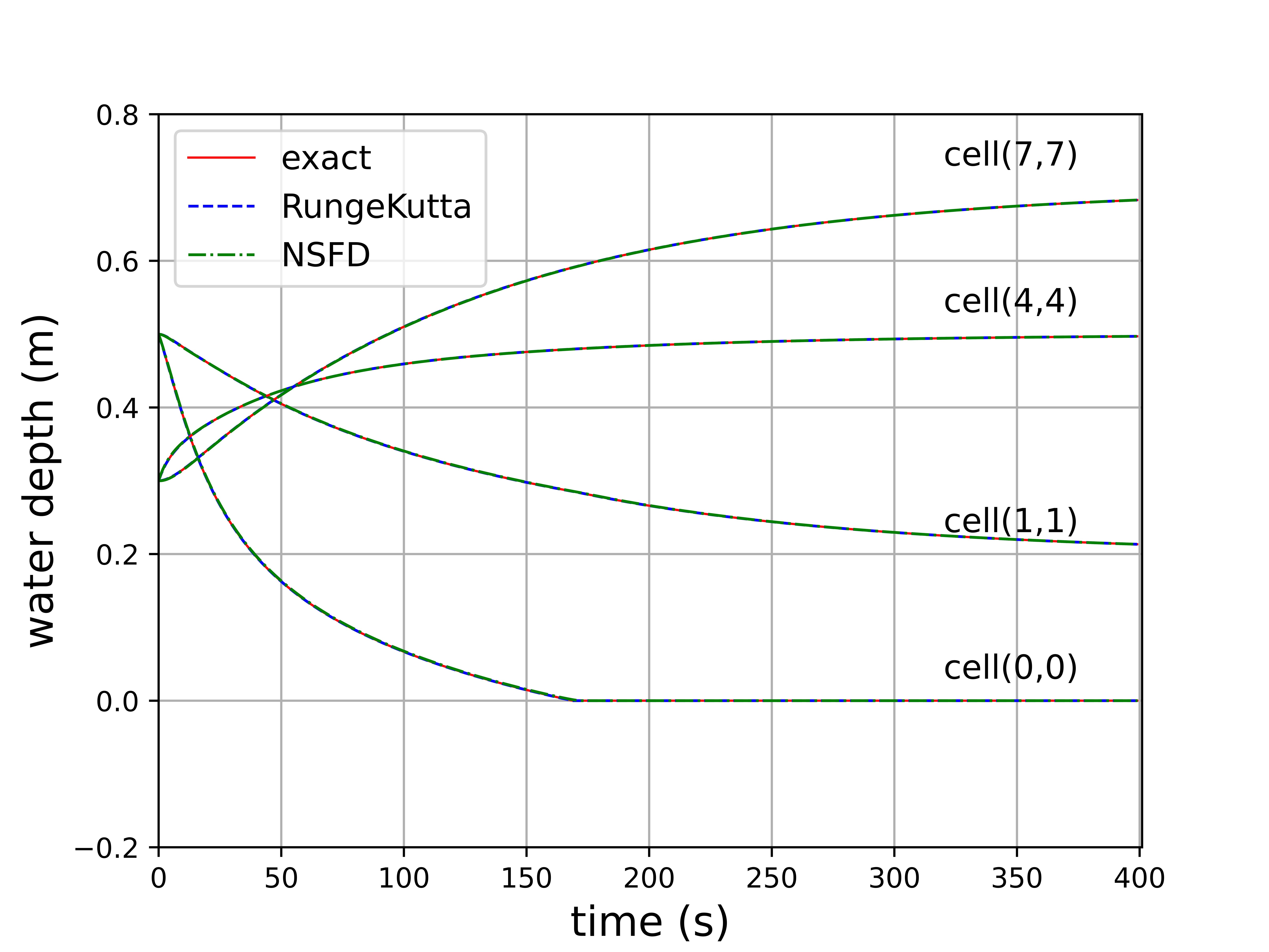}
		\subcaption{ }
	\end{subfigure}
	\hfill
	\begin{subfigure}{0.48\textwidth}
		\centering
		\includegraphics[height=15pc]{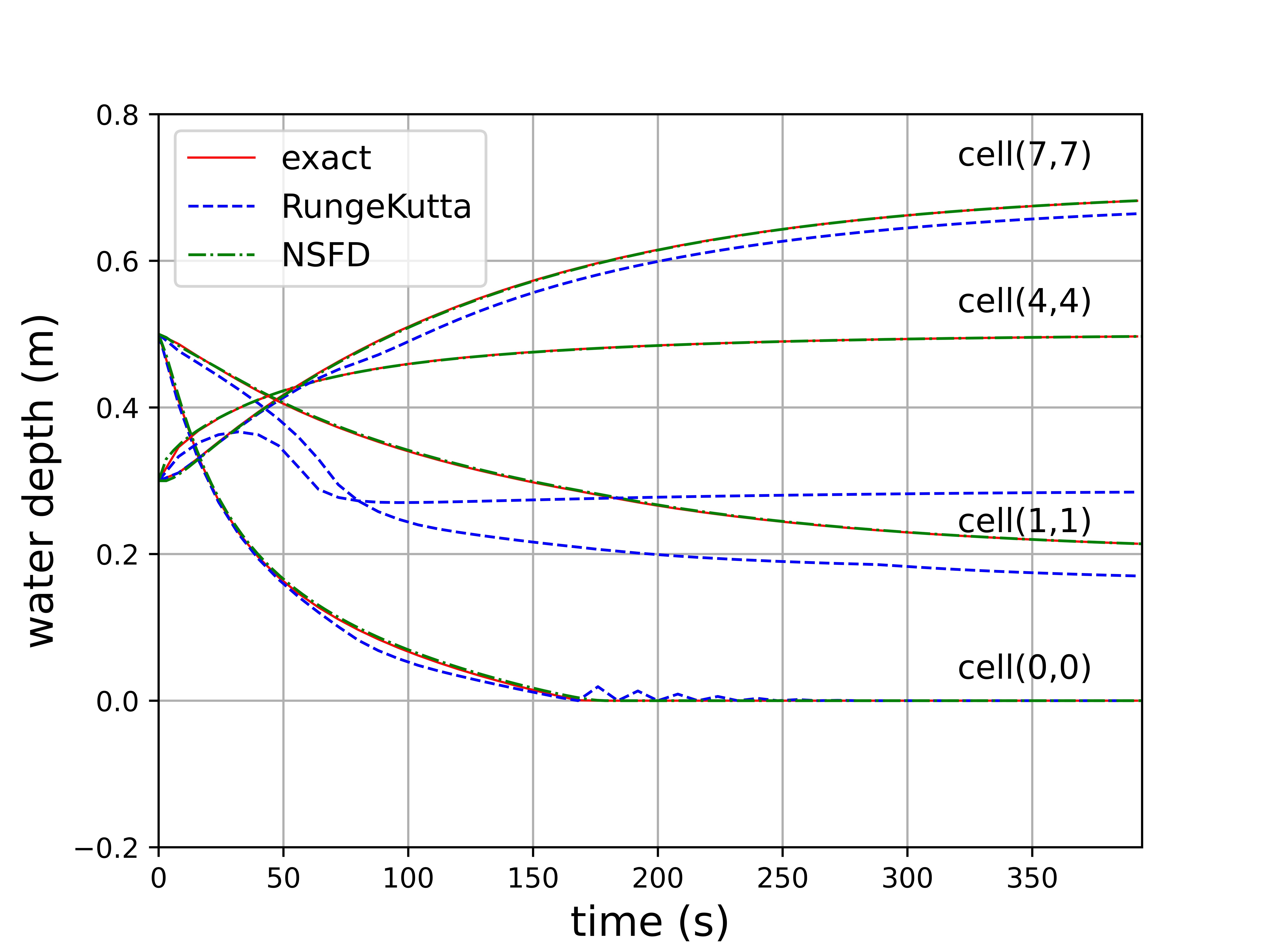}
		\subcaption{ }
	\end{subfigure}
	\caption{(a) elevation distribution, (b) initial water depth distribution, (c) the numerical solutions of the numerical method \ref{eq:3.4} and the Runge-Kutta method both with $\tau=1s$ for the dynamical system \ref{eq:4.1}, and the exact solution, (d) the numerical solutions of the numerical method \ref{eq:3.4} and the Runge-Kutta method both with $\tau=8s$ for the dynamical system \ref{eq:4.1}, and the exact solution. The calculation results for $\tau=8s$ are presented in Fig \ref{fig2}(d), demonstrating the close proximity of the numerical method \ref{eq:3.1} to the real solution. On the comparison cells, the maximum error values of the water depth calculated by the numerical algorithm \ref{eq:3.1} and the exact solution $\|H-H_{NSFD}\|_{\infty}$ are 0.0056m, 0.0024m, 0.0063m and 0.0024m, respectively. Because of the error propagation of the modified Runge-Kutta method, there is a large deviation from the exact solution. The maximum error values of the calculated water depth and the exact solution $\|H-H_{RK}\|_{\infty}$ are 0.019m, 0.097m, 0.21m and 0.020m, respectively. The above numerical experiments verify the conclusion of the section 3: if $q > Q$, the numerical method \ref{eq:3.1} has both numerical stability and elementary stability when solving the model equation \ref{eq:3.2}.  \label{fig2}}
\end{figure}

\subsection{Experiment in urban region}

The experimental simulation area encompasses a 400m $\times$ 512m urban region, with its Digital Elevation Map (DEM) depicted in Fig \ref{fig3}. The DEM exhibits a resolution of 2m $\times$ 2m. Flooding within the region is induced by a uniformly distributed rainfall event lasting for 9 minutes, resulting in a total rainfall depth of 58mm and an approximate total rainfall volume of 5,939.2$m^3$. During the initial 9 minutes, the rainfall intensity per minute follows the sequence (0,1,2,4,7,7,5,2,1,0) mm/min; linear interpolation is employed within each interval. As mentioned in the second part, rules matrix $A$ can be flexibly selected according to the actual urban environment under the condition of satisfying theorem \ref{thm1}. In practical urban flood simulation, commonly employed rules include Manning formula, weir flow formula, hole flow formula, simplified formula of Saint-Venant equation and so on.

The rules matrix $A$ is constructed based on Manning's formula in this experiment, and the outflow $f_{ij}$ from cell $i$ to cell $j$ is defined as:

\begin{equation}
	f_{ij}=\frac{\beta sign(sl_{ij})\alpha_{ij}R_{ij}^{3/2} |sl_{ij}|^{1/2}}{n_{ij}},
	\nonumber
\end{equation}
where $\alpha_{ij}$ represents the cross-sectional area of water flow between cell $i$ and cell $j$, $R_{ij}$ denotes the hydraulic radius between cell i and cell $j$, $sl_{ij}$ is the slope between cell $i$ and cell $j$, $n_{ij}$ is the Manning’s coefficient between cell $i$ and cell $j$, $sign(\times)$ is symbolic function, $\beta is$ control coefficient. $\alpha_{ij}$ is generally approximated in two ways: $l_{ij}(h_i+h_j)/2$ or $l_{ij}\left[(h_i+h_j)/2+sign(sl_{ij})(h_i-h_j)/2\right]$, where $l_{ij}$ denotes the length of two-cell common edge. The second approximation reflects the upwind properties of shallow water waves. In general, the length of the cell edge is much greater than the water depth, so $R_{ij}$ can be directly approximated by the water depth of the cell: $R_{ij}=(h_i+h_j)/2$ or $R_{ij}= (h_i+h_j)/2+sign(sl_{ij})(h_i-h_j)/2$. $sl_{ij}$ is generally approximated by the two-cell water level difference: $sl_{ij}=(H_i-H_j)/\Delta x_{ij}$, where $\Delta x_{ij}$ is the distance between two cell centers. $n_{ij}$ could take $n_{ij}=(n_i+n_j)/2$.

\begin{figure*}
	\centerline{\includegraphics[width=0.8\textwidth]{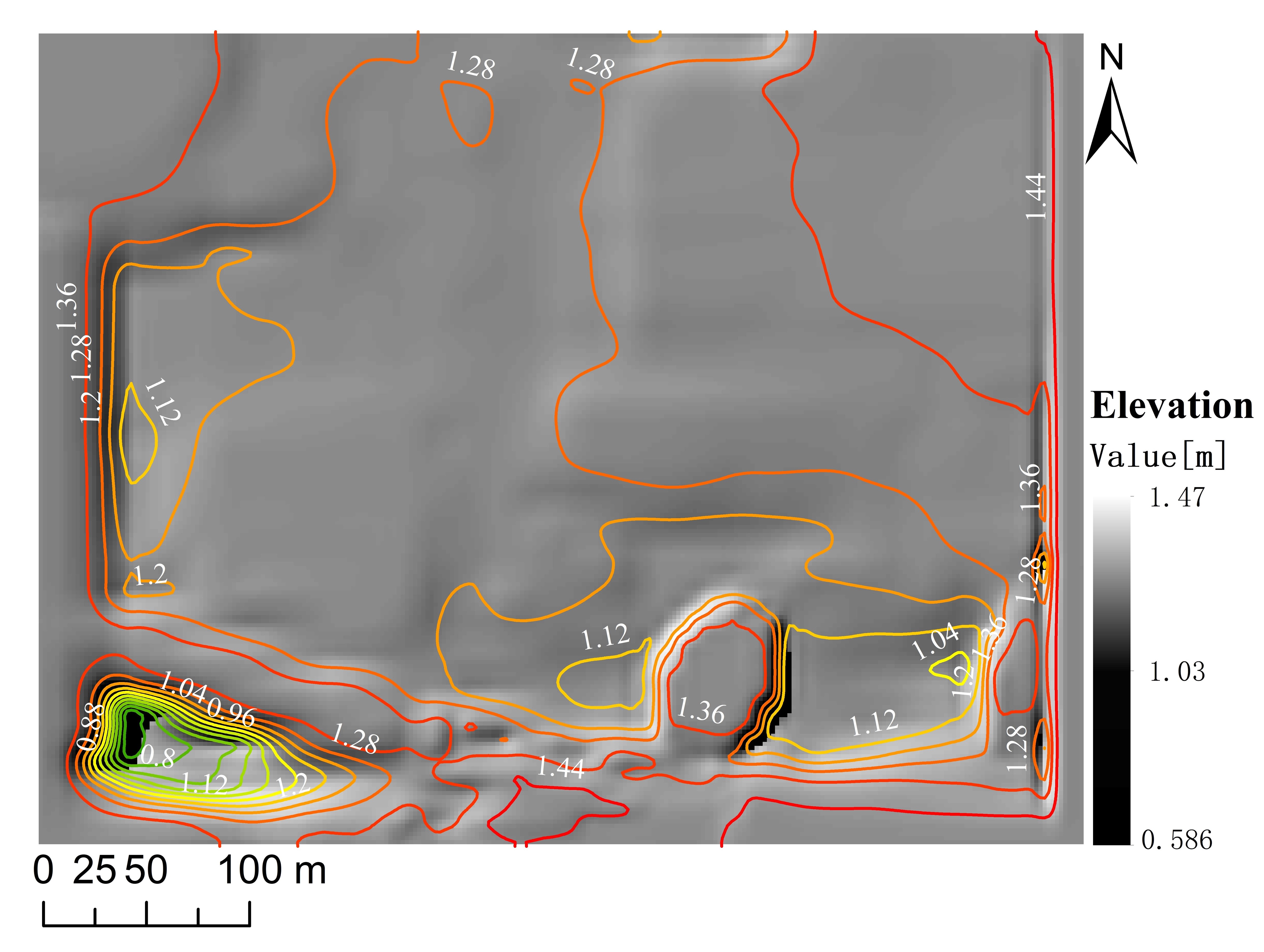}}
	\caption{The Contours and DEM shadow map of experimental simulation area \label{fig3}
	}
\end{figure*}

Based on the defined outflow $f_{ij}$, the water depth evolution process of the $i-th$ cell is constructed as follows:

\begin{equation}
	\label{eq:4.3}
	\frac{d h_i}{d t}=\left\{\begin{array}{cc}
		0 & , h_i \leq 0 \& \sum_{j \in N_i} f_{i j} \geq 0 \\
		-\frac{\sum_{j \in N_i} f_{i j}}{s_i} & , \text { else }
	\end{array},\right.
\end{equation}
where $s_i$ is the area of the $i-th$ cell.

The settings of this numerical experiment are: $\alpha_{ij}=l_{ij}(h_i+h_j)/2$, $R_{ij}=(h_i+h_j)/2$, $sl_{ij}=(H_i-H_j)/\Delta x_{ij}$, $n_{ij}=(n_i+n_j)/2$. It is easy to verify that the regular matrix $A$ formed by the equations \ref{eq:4.3} and the above settings satisfies all the conditions required by theorem \ref{thm1}. Then, given the initial water depth, the solution of the formed dynamic system naturally satisfies the four physical conditions in \ref{eq:2.3}. Let $F_i\left(h\right)=-\sum_{j\in N_i} f_{ij}/s_i and further F\left(h\right)=[F1,F2,…,FN]^{\top}$. Assume that $F\left(h\right)$ is differentiable at $h$, and each cell is square with the same size, then the Jacobian matrix of $F\left(h\right)$ is:

\begin{equation}
	\nabla F_{i j}=\frac{1}{2} k U_{i j}+\frac{5}{2} k V_{i j} ,
	\nonumber
\end{equation}
where

\begin{equation}
	k=\frac{l}{2^{5 / 2} sn \sqrt{\Delta x}} ,
	\nonumber
\end{equation}

\begin{equation}
	U_{i j}=\left\{\begin{array}{cc}
		\frac{\left(h_i+h_j\right)^{5 / 2}}{\sqrt{\left|H_i-H_j\right|}} & , j \in N_i \\
		-\sum_{j \in N_i} U_{i j} & , i=j \\
		0 & , \text { else }
	\end{array},\right. 
	\nonumber
\end{equation}

\begin{equation}
	V_{i j}=\left\{\begin{array}{cc}
		0 & , \text { else } \\
		\operatorname{sign}\left(H_j-H_i\right)\left(h_i+h_j\right)^{3 / 2} \sqrt{\left|H_i-H_j\right|} & , j \in N_i \text { and } i>j \\
		-V_{j i} & , j \in N_i \text { and } i<j \\
		\sum_{j \in N_i} V_{i j} & , i=j \\
		0 & , \text { else }
	\end{array}.\right.
	\nonumber
\end{equation}

Clearly, $U$ is a symmetric matrix and $V$ is an antisymmetric matrix. The denominator of $U_{ij}$ shows that $F\left(h\right)$ is not differentiable at $H_i=H_j$. When the water level tends to the horizontal, $V_{ij}$ tends to 0, and $U_{ij}$ tends to infinity. However, from the matrix structure of $U$ and $V$, the eigenvalues of the Jacobian matrix $\mathrm{\nabla F}$ are all less than or equal to 0. Therefore, the urban flood dynamical system is stable at the fixed point, which is consistent with the conclusion of the Theorem \ref{thm1}. It is worth noting that the depth vector only moves in the non-negative plane $\{\}h|\sum h_i= \sum h_0,h_i\ge 0\}$. So, the asymptotic stability is relative to this plane. From the structure of $U_{ij}$, the closer the water level is to the horizontal, the more stringent the conditions for maintaining the stability of the numerical algorithm \ref{eq:3.4}, and the smaller the corresponding time step.
  
Simulation parameters of our model. In this experiment, each grid in DEM is regarded as a cell, with a total of 51,200 cells and with 200 rows and 256 columns. The in equation \ref{eq:4.3} is taken as 0.05 ${m/s}^{1/3}$ \cite{guo2023}, $\Delta x_{ij}$ is taken as 2m which is the distance between the centroid of the two-cell. The calculated $k$ value is 1.25.

The boundary is set to closed and the initial water depth is 0. Hydrologic processes such as infiltration and interception are not considered. In this experiment, $F\left(h\right)$ is not differentiable at the fixed point, so the Jacobian matrix at the fixed point cannot be obtained, and then the corresponding eigenvalues cannot be obtained to estimate the value of $Q$. Based on the calculated $k$, sufficiently large $q$ is chosen in this paper. Whatmore, we take $q=100$, $h=1s$ and the simulation period  $1.5h$ (5400s).

To illustrate the effectiveness of the proposed method, our model is compared with HEC-RAS (version 6.3) software. HEC-RAS developed by the U.S. Army Corps of Engineers is designed to perform one and two-dimensional hydraulic calculations. HEC-RAS simulation parameters are set as follows. The HEC-RAS grid size is 2m$\times$2m, which basically corresponds to the DEM grid one by one, that is, to the cell in our model one by one. Manning’s coefficient and rainfall parameters are the same as those set by our model. Eulerian Shallow Water Equation Solver (SWE-EM (stricter momentum)) is selected for the calculation method, and the default algorithm parameters are applied for the SWE-EM method.

For comparison, the distance $d_p^\prime$ between two water depth snapshots (the depth vector at a certain moment) is defined as: $d_p^\prime = \sum_i p_i |h_i^1-h_i^2|$, where $|h_i^1-h_i^2|$ is the absolute value of water depth difference of the $i-th$ cell between the snapshot 1 and of the snapshot 2, $p_i$ is the probability (frequency) of  $|h_i^1-h_i^2|$ . The distance $d_p^\prime$ is equivalent to $L_1$ norm divided by the total number of cells. $L_2$ norm divided by the total number of cells denoted as $d_2^\prime$ is also used to measure the distance between two water depth snapshots. Let $\hat h$ to be the mean water depth of the entire cellular space. To measure the relative error, the relative distance $R_p$ corresponding to $d_p^\prime$ is defined as: $R_p=d_p^\prime/ \bar h$, and the relative distance $R_2$ corresponding to $d_2^\prime$ is defined as: $R_2=d_2^\prime/ \bar h$.

The comparison of simulation results between our model and the HEC-RAS model is shown in Fig \ref{fig4}. Water depth snapshots at 20 min, 60 min and 90 min are selected for comparison, corresponding to the early, middle and late flood respectively. The average water depth of the total.

\begin{center}
\begin{table*}[!h]%
\caption{The comparative analysis between our model and HEC-RAS model.\label{tab1}}
\begin{tabular*}{\textwidth}{@{\extracolsep\fill}ccccccc@{}}
\toprule
 &  &  &  &  & \multicolumn{2}{c}{\textbf{Total water $(m^3)$}} 
 \\\cmidrule{6-7}
time(min)  & $d_p^\prime$ (mm) & $d_2^\prime$ (mm)  & $R_p$  & $R_2$ & Our model & HEC-RAS  \\
\midrule
20 & 2.2  & 0.02  & 7.8\%  & 0.07\% & 5939.48 & 5871.16   \\
60 & 2.1  & 0.018  & 7.1\%  & 0.06\% & 5938.82 & 5837.39   \\
90 & 2.1  & 0.017  & 7.3\%  & 0.05\% & 5938.48 & 5832.83   \\
\bottomrule
\end{tabular*}
\end{table*}
\end{center}

rainfall is 0.029m, and the maximum water depth when stable is roughly 0.38m to 0.4m. At each selected moment, the distances and the relative distances between the two model depth snapshots are shown in Tab 1. The distance $d_p^\prime$ is around 2mm, and the relative distance $R_p$ is around 7.5\%, which is 7.8\%, 7.1\% and 7.3\%, respectively. The distance $d_2^\prime$ is two orders of magnitude smaller than the distance $d_p^\prime$, which reflects that the distance distribution on the cells field is relatively uniform to some extent. It is worth noting that the water depth distribution in HEC-RAS snapshots is terraced, while the water depth distribution in our model is smoother. The above analysis shows that the output results of the two models are basically consistent, which confirms the effectiveness of our model in solving the actual urban flooding problem.

\begin{figure}[htbp!]
	\centering  
	\begin{subfigure}{0.48\textwidth}  
		\centering
		\includegraphics[height=15pc]{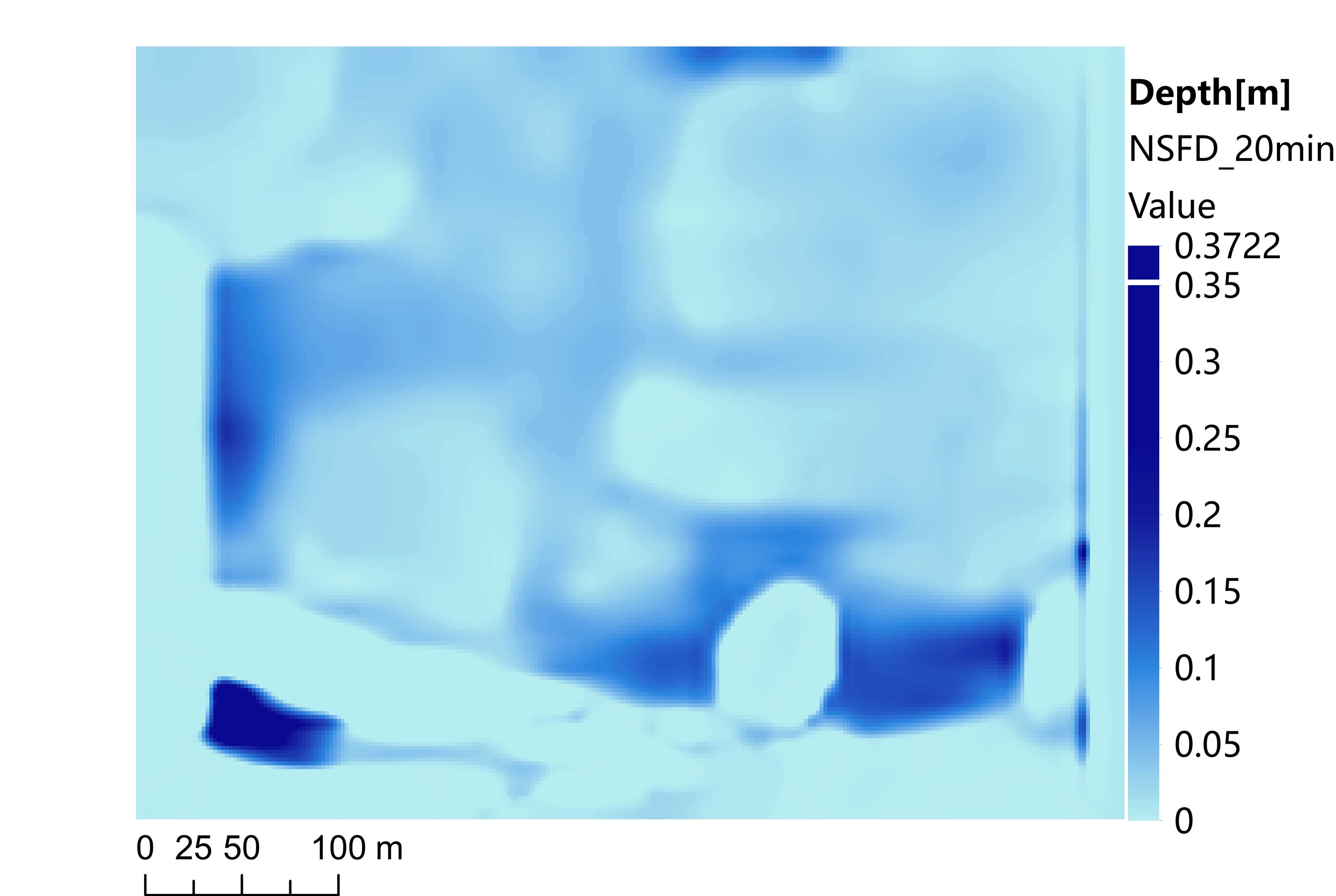}
	\end{subfigure}
	\hfill 
	\begin{subfigure}{0.48\textwidth}
		\centering
		\includegraphics[height=15pc]{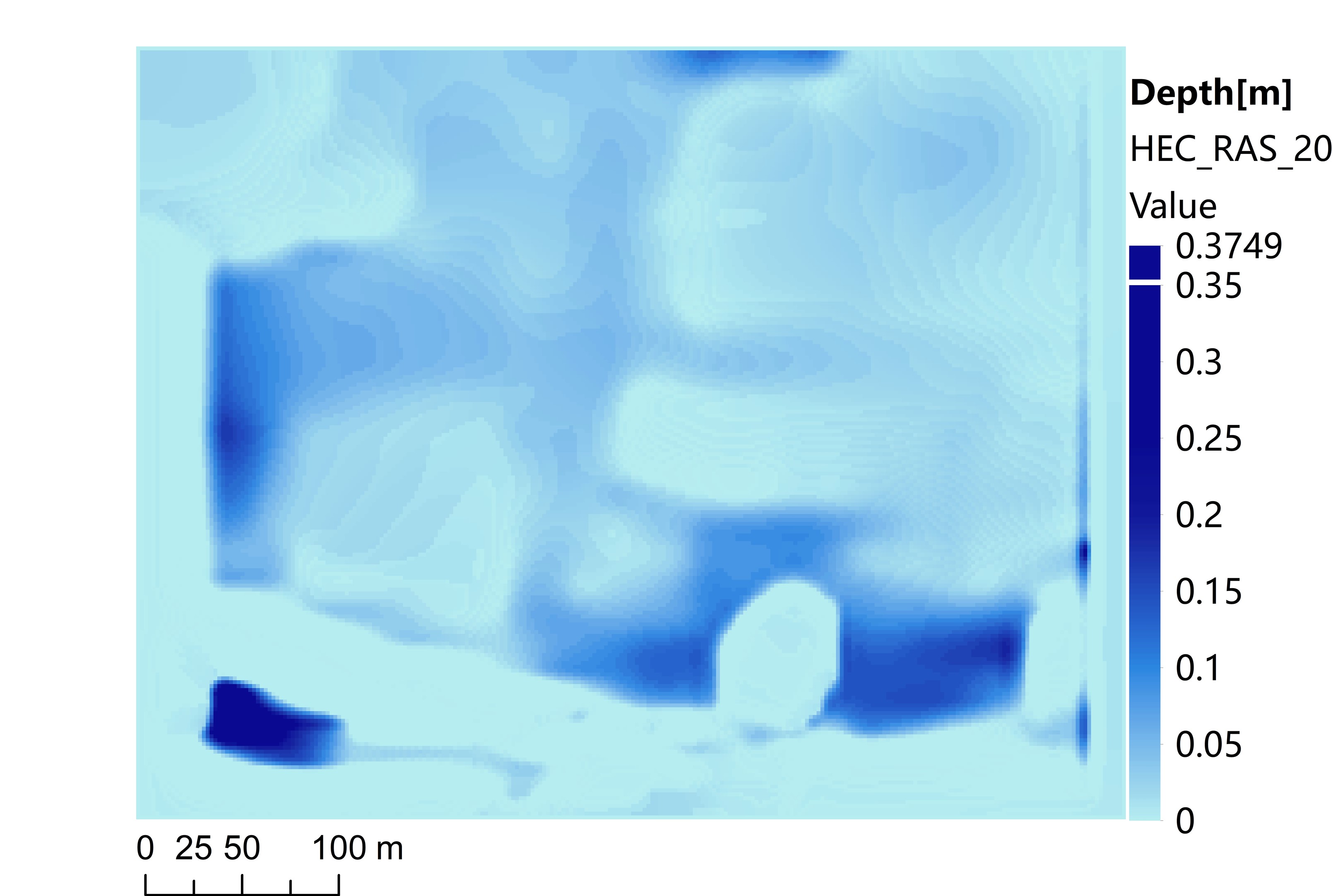}
	\end{subfigure}
	
	\begin{subfigure}{0.48\textwidth}
		\centering
		\includegraphics[height=15pc]{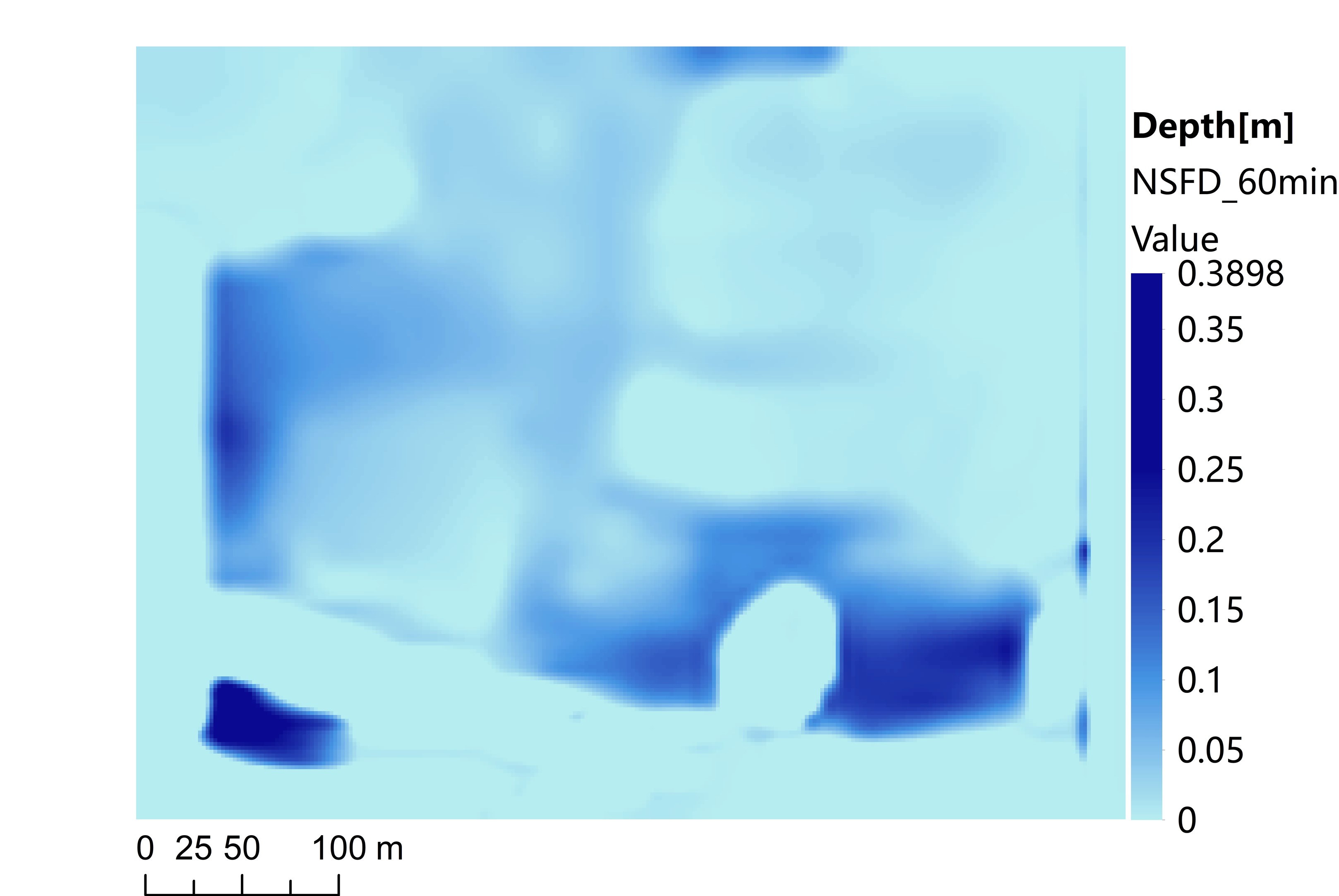}
	\end{subfigure}
	\hfill
	\begin{subfigure}{0.48\textwidth}
		\centering
		\includegraphics[height=15pc]{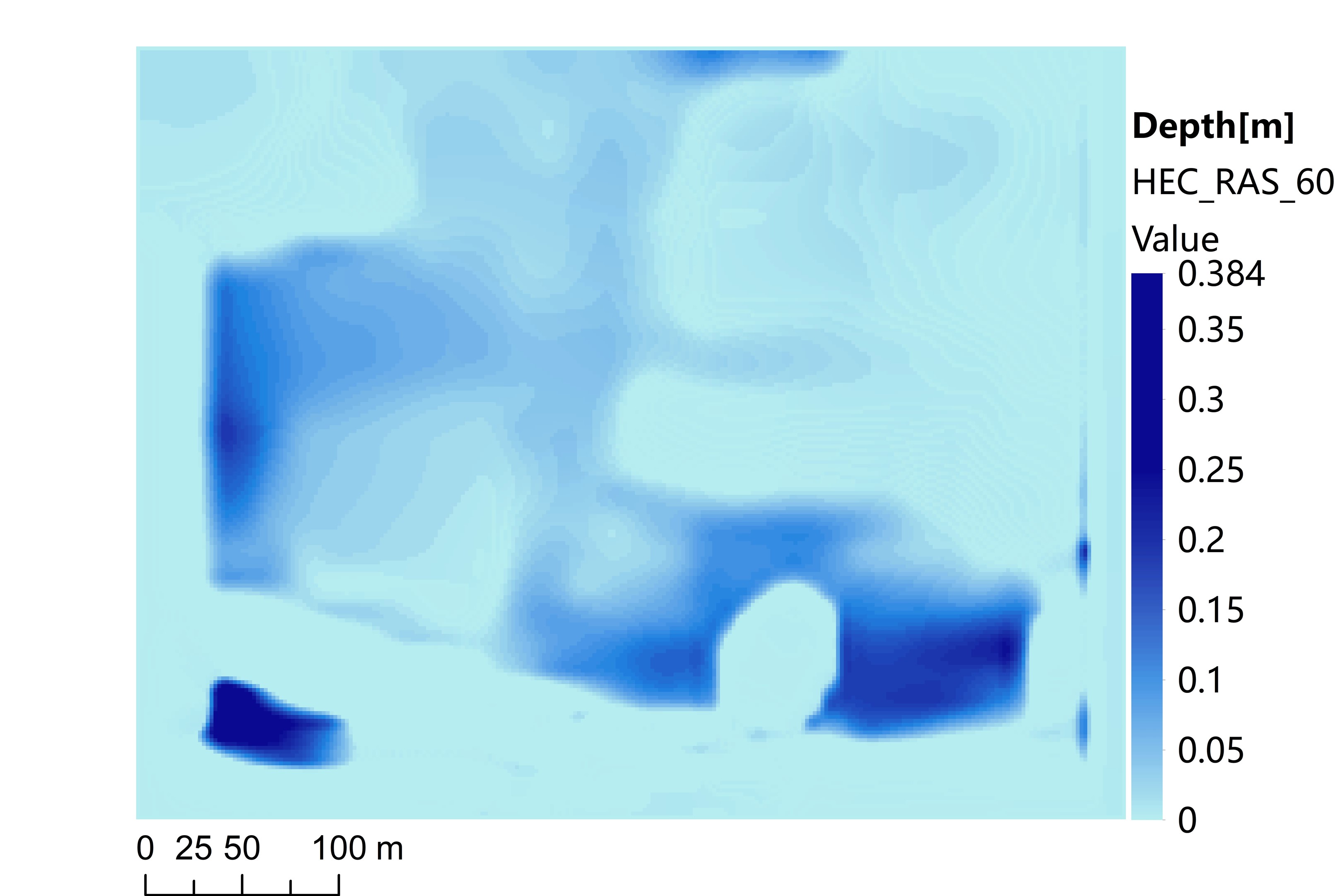}
	\end{subfigure}
	
	\begin{subfigure}{0.48\textwidth}
		\centering
		\includegraphics[height=15pc]{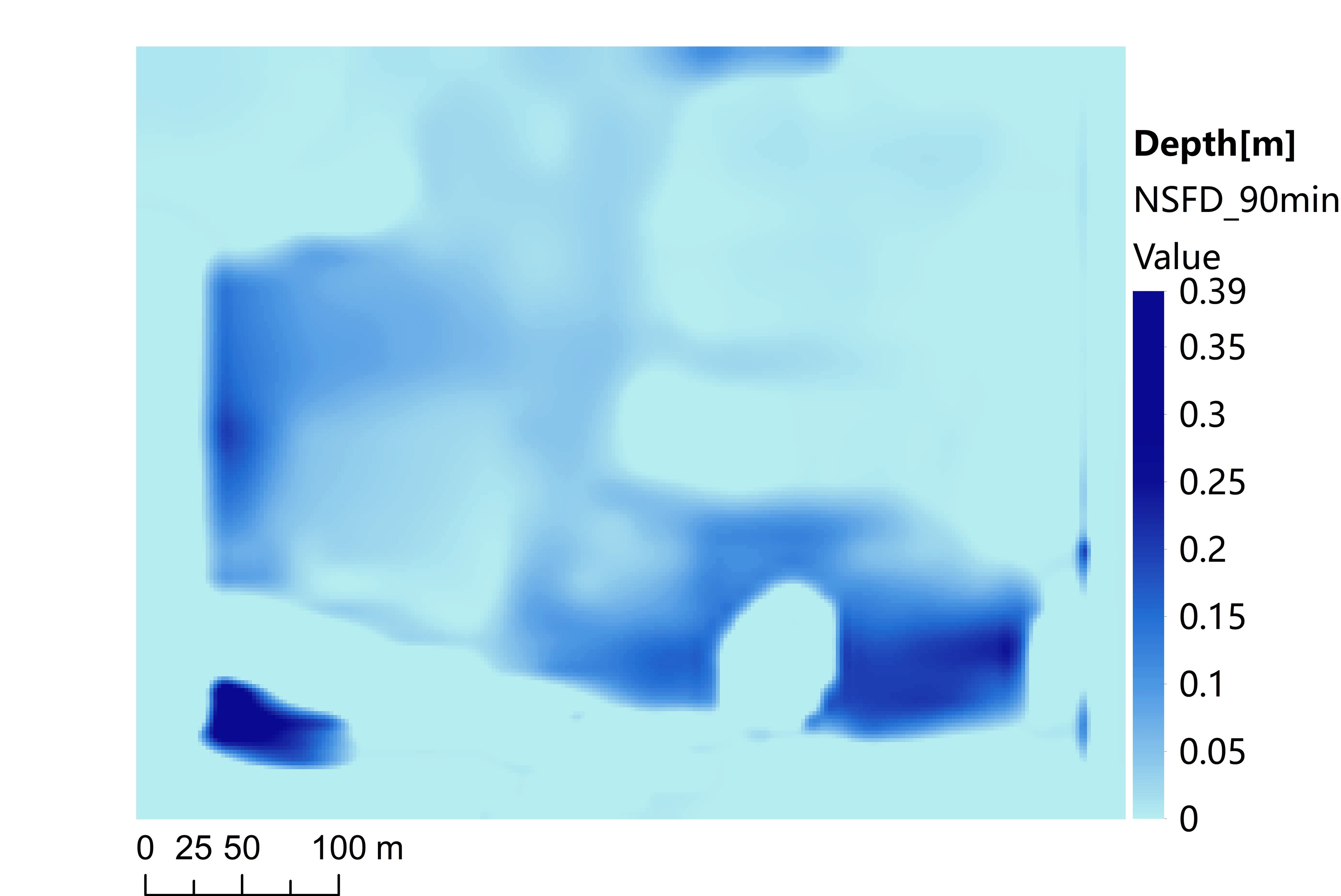}
	\end{subfigure}
	\hfill
	\begin{subfigure}{0.48\textwidth}
		\centering
		\includegraphics[height=15pc]{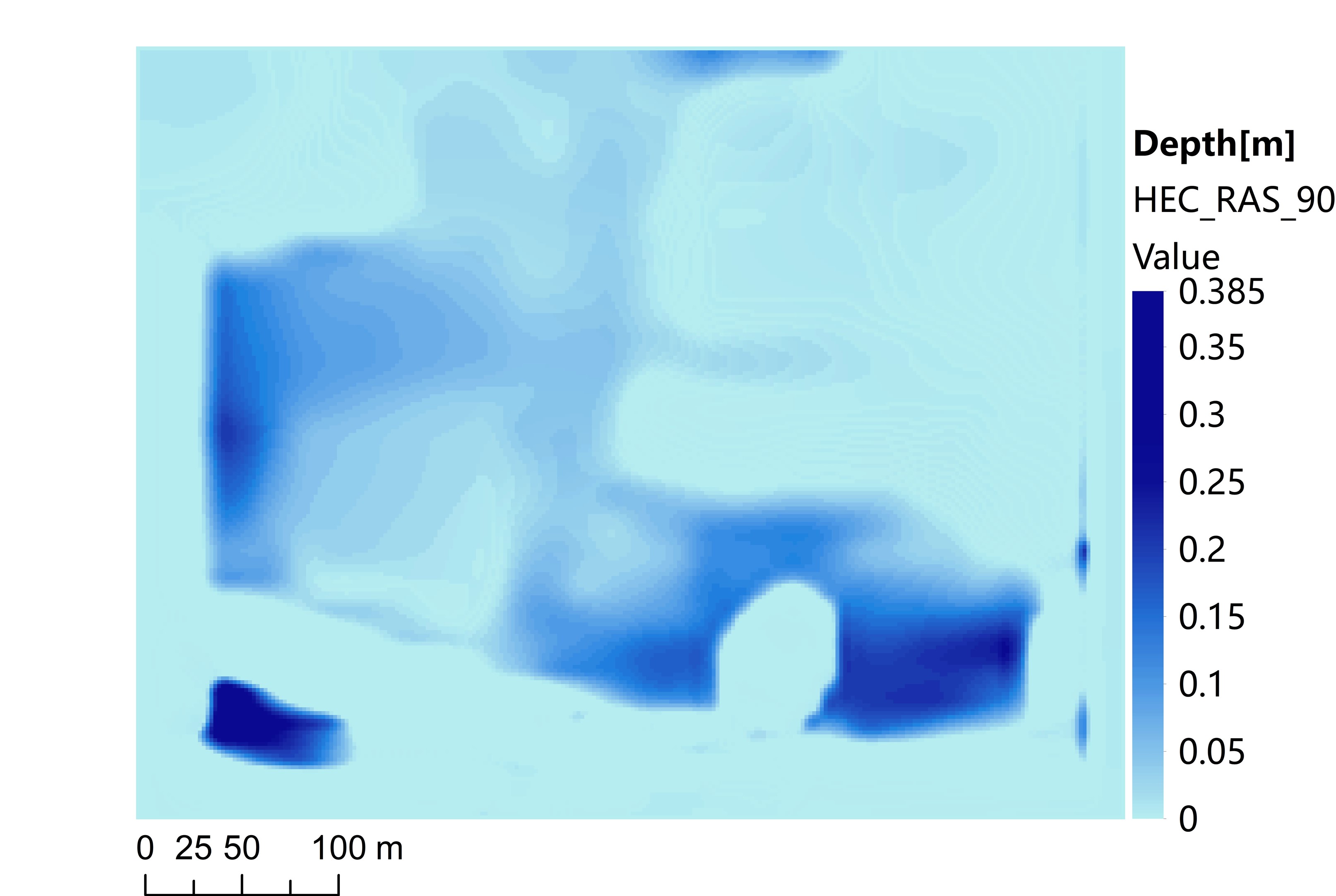}
	\end{subfigure}
	\caption{ The water depth snapshots at 20min, 60min and 90min from our model and HEC-RAS respectively. \label{fig4}}
\end{figure}

The total rainfall volume in the region is $P=5,939.2m^3$, and the rainfall ended at 9 minutes. Considering the boundary closure and water conservation, the total water volume should be the same as the total rainfall volume at any time after 9 minutes. That is, after 9 minutes, the depth vector only moves in the non-negative plane $\{ h | \sum h_i s = P, h_i \ge 0\}$. According to the data in Tab 1, the conservation of water is basically maintained by both the model and the SWE-EM method in HEC-RAS. This also shows that the numerical algorithm \ref{eq:3.4} does have conservation characteristics. Whether water is basically conserved could reflect the error propagation in the numerical calculation process. It should be noted, however, that this "reflect" is not sufficient. In the complex dynamical system of the experiment, the water is still conserved, reflecting that the error is very likely not amplified. Therefore, q and h of this experiment are reasonable.

\section{Conclusions}\label{sec5}

In this paper, the theory of the Urban Flood Dynamic System Model is constructed based on the idea of the Cellular Automata Urban Flood Model. In this model, water interaction rules can be flexibly selected according to the actual urban environment, but should meet the macroscopic physical constraints of water flow movement. In this paper, a constraint condition is given for the cellular water exchange rules matrix. If the condition is satisfied, the water depth evolution naturally satisfies the non-negative, conservation, non-increasing total gravitational potential energy and eventually tends to the invariant distribution. 

To address numerical oscillation, negative water depth, and water non-conservation in the CAUFM, we propose a first-order conservation nonstandard finite difference algorithm to solve the UFDSM by enhancing an existing nonstandard finite difference algorithm. The proposed method ensures flow conservation by separately considering the inflow and outflow of the central cell, treating the outflow from upstream cells as the inflow for central cell, the value both of which are not equal in the non-standard finite difference algorithm referenced. This approach allows for comprehensive flow calculations based solely on cell outflows. We prove that the proposed numerical method has first-order precision, and give the numerical stability condition through the model equation, and illustrate that the numerical stability condition is naturally satisfied as long as the elementary stable condition is satisfied. Finally, by conducting an experiment with an analytical solution, we compare the proposed UFDSM and numerical method against the analytical solution, demonstrating that the numerical method exhibits the properties we have proven. Furthermore, through experimentation in an urban region, we compare our model with HEC-RAS software and demonstrate its ability to accurately simulate the flood process.

\section{Discussion}\label{sec6}

The UFDSM proposed in this paper is suitable for complex urban areas with small topographic relief due to the significant weakening of the momentum term caused by numerous obstacles. However, its applicability may be limited in other areas where the momentum term cannot be effectively weakened. Although the nonstandard finite difference algorithm proposed naturally preserves many properties of the UFDSM, it requires a small time step. Nevertheless, this limitation can be overcome by employing an implicit updating method and dynamically adjusting the step size strategy to increase the time step size, which represents one of our future research directions.

The fundamental advantage of our proposed UFDSM is that it absorbs the data of complex urban systems as much as possible by simplifying the algorithm, while the traditional 2D hydraulic model adapts to the algorithm by simplifying the urban system. Firstly, we can use various cell types to represent different urban land uses such as buildings, roads, grasslands, water bodies, open spaces, mountains, pipeline nodes and boundaries that can be expanded freely. Secondly, different and diverse cellular water flux calculation rules and updating rules can be adopted. Different cellular types may adopt different interaction rules when water interactions occur. For example, the interaction between building cells and other cells may select weir flow formula or pore flow formula, and the interaction between pipeline nodes and cells on the ground can be set as weir flow formula and overflow rule. In addition, the flux rules between any two cells can be customized to achieve a more accurate portrayal of the complex flow of the city. Thirdly, the neighborhood of the cell can be adjusted flexibly, which can realize the characterization of barriers such as dykes and community walls. Compared with the traditional 2D hydrodynamical model, we stand on the data side in the UFDSM.

Compared with the partial differential equation model based on shallow water equation, UFDSM is easier to couple other hydrological and hydraulic processes, such as distributed rainfall model, river network model, drainage network model, infiltration model, hydraulic engineering scheduling model, etc., thereby enabling the development of a comprehensive urban flood and flood process rapid simulation framework.

The UFDSM offers a more accessible approach to data assimilation compared to partial differential equation models, thereby enabling more precise predictions. Given the well-established and mature research on data assimilation in dynamical systems, it can be readily applied to this model. 
Consequently, the UFDSM proposed in this paper exhibits comparative advantages and holds promising prospects for practical applications in simulating urban flood processes.


\bmsection*{Acknowledgments}

The authors declare that they have no known competing financial interests or personal relationships that could have appeared to influence the work reported in this paper.
Funding: This work was supported by National Natural Science Foundation of China [Grant No. 72104123] and Tsinghua University Initiative Scientific Research Program [grant number 20223080006].

\bmsection*{Open Research}

In Part 4.1, the DEM data (Fig \ref{fig1}(a)) and initial water depth data (Fig \ref{fig1}(b)) used in the model equation were well-designed by us. The calculation process for the model equation is fully implemented by python version 3.10.3, available at http://www.python.org, the data structure and some basic algorithms mainly rely on Numpy version 1.22.3 \cite{harris2020}, available under the Numpy license at https://numpy.org/, and the resulting graph (Fig \ref{fig1}(c) and Fig \ref{fig1}(d)) were drawn using matplotlib version 3.5.1 \cite{hunter2007}, available under the Matplotlib license at https://matplotlib.org/. In Part 4.2,the DEM data is generated by the author's modulation. The algorithm process was completely implemented by ISO 14 C++ version within Windows10,  available at https://en.cppreference.com/w/cpp/14. HEC-RAS version 6.3 is available under  CEIWR-HEC's policy at https://www.hec.usace.army.mil/software/hec-ras/. The  Fig\ref{fig2} and Fig\ref{fig3} were drawn using free and open source Geographic Information System QGIS 3.34, available under at https://www.qgis.org/.The code, data, and configuration files for this study are licensed under MIT and published on GitHub https://github.com/tianyongsen/Urban-Flood-Dynamical-System-Model.git.

\bmsection*{Conflict of interest}

The authors declare no potential conflict of interests.

\bmsection*{Data Availability Statement}

Data sharing not applicable to this article as no datasets were generated or analysed during the current study.

\bibliography{wileyNJD-AMA}

\end{document}